\newcommand{\cev}[1]{\reflectbox{\ensuremath{\vec{\reflectbox{\ensuremath{#1}}}}}}
\theoremstyle{definition}
\newtheorem{definition}{Definition}[section]
\theoremstyle{plain}
\newtheorem{theorem}{Theorem}[section]
\theoremstyle{plain}
\newtheorem{example}{Example}[section]
\theoremstyle{remark}
\begin{document}

\title{Space-time Peer-to-Peer Distribution of Multi-party Entanglement for Any Quantum Network}

\author{Yuexun Huang}
\email{yesunhuang@uchicago.edu}
\affiliation{%
  \institution{Pritzker School of Molecular Engineering, University of Chicago}
  \city{Chicago}
  \state{Illinois}
  \country{USA}
}

\author{Xiangyu Ren}
\email{xiangyu.ren@ed.ac.uk}
\affiliation{%
  \institution{School of Informatics, University of Edinburgh}
  \city{Edinburgh}
  \country{United Kingdom}
}

\author{Bikun Li}
\affiliation{%
  \institution{Pritzker School of Molecular Engineering, University of Chicago}
  \city{Chicago}
  \state{Illinois}
  \country{USA}
}

\author{Yat Wong}
\affiliation{%
  \institution{Pritzker School of Molecular Engineering, University of Chicago}
  \city{Chicago}
  \state{Illinois}
  \country{USA}
}

\author{Zhiding Liang}
\affiliation{
    \institution{Rensselaer Polytechnic Institute}
    \city{Troy}
    \state{New York}
    \country{USA}
}

\author{Liang Jiang}
\email{liang.jiang@uchicago.edu}
\affiliation{%
  \institution{Pritzker School of Molecular Engineering, University of Chicago}
  \city{Chicago}
  \state{Illinois}
  \country{USA}
}

\renewcommand{\shortauthors}{Huang et al.}

\begin{abstract}
Graph states are a class of important multiparty entangled states, of which Bell pairs are the special case. Realizing a robust and fast distribution of arbitrary graph states in the downstream layer of the quantum network can be essential for further large-scale quantum networks. We propose a novel quantum network protocol called P2PGSD inspired by the classical Peer-to-Peer (P2P) network to efficiently implement the general graph state distribution in the network layer, which demonstrates advantages in resource efficiency and scalability over existing methods for sparse graph states. An explicit mathematical model for a general graph state distribution problem has also been constructed, above which the intractability for a wide class of resource minimization problems is proved and the optimality of the existing algorithms is discussed. In addition, we leverage the spacetime quantum network inspired by the symmetry from relativity for memory management in network problems and use it to improve our proposed algorithm. The advantages of our protocols are confirmed by numerical simulations showing an improvement of up to 50\% for general sparse graph states, paving the way for a resource-efficient multiparty entanglement distribution across any network topology.
\end{abstract}






\maketitle

\section{INTRODUCTION}
The quantum Internet~\cite{cacciapuoti2019quantum,wehner2018quantum,kimble2008quantum} has long been expected to be the game changer in information science, having a profound impact on many fields beyond its classical counterpart, including communication, computation, and metrology. A reliable quantum network shall be fundamental to realizing fruitful applications in quantum communication protocols such as multiparty quantum key distribution~\cite{bennett2014quantum}, quantum secret sharing~\cite{cleve1999share}, and quantum voting~\cite{buhrman2022quantum,vaccaro2007quantum}. In addition, cloud-based computation seems more likely to be the common practice for the public to access quantum speed-up due to the complicated and expensive maintenance of quantum computers~\cite{ravi2021quantum,devitt2016performing}, not to mention it has been a common practice for industry in the classical world. Some protocols such as quantum delegated computing~\cite{barz2012demonstration,fitzsimons2017private}, and quantum private queries~\cite{giovannetti2008quantum} even demonstrate unique security advantages in quantum worlds that lead to the idea of quantum data centers~\cite{liu2023data}. Finally, entanglement-boosted ultrahigh precision metrology, such as quantum clock synchronization~\cite{komar2014quantum} and quantum telescopes~\cite{gottesman2012longer} also demand the implementation of reliable quantum networks due to the need to combine physically separated signals.

The structure of the quantum networks can be made stack-like just like the classical one~\cite{pirker2019quantum}, where the aforementioned applications mostly lie on the application layer built upon the availability of reliable downstream layers responsible for information exchanges and creating reliable entanglement among operating clients, the latter of which is the unique property of the quantum networks. There has been tremendous work focusing on routing the entanglement to create Bell pairs efficiently between requesting clients~\cite{shi2020concurrent,caleffi2017optimal,chakraborty2019distributed,pant2019routing}, after which multiparty entanglement can be created using established Bell pairs and local operations. However, this might require additional memory time, which is expensive in the quantum case. On the other hand, the graph states, of which Bell pairs can be considered a special case, providing intrinsic multiparty entanglement, can be considered as the more general resources for quantum networks~\cite{hein2006entanglement}. It is important to generalize the routing of the entanglement in the network layer from Bell pairs to general graph states directly, which is a distinct problem from classical networks arising from the quantum property of multiparty entanglement and could provide advantages in resource usage compared to the previous scheme, which has been shown for quantum cryptography~\cite{memmen2023advantage}.

Although some pieces of literature have researched this problem\cite{cuquet2012growth,pirker2018modular,meignant2019distributing,fischer2021distributing,xie2021graph,fan2024optimized}, most of them just care about bell-pair usages without considering the stochastic nature of the quantum network, where the quantum connections between nodes could fail with relatively high probability. In addition, the local operation performed at each node can also transform and stitch the separated parts of the graph state together, which can lead to optimized performance by a carefully designed protocol. In particular, we consider the general fusion operation here and define a general class of graph state distribution problems. Based on this problem, we analyze its intractability and hardness, which hinders any efficient and exact algorithms for an optimal solution. Then we leverage the idea of Peer-to-Peer distribution~\cite{barkai2001peer} from the classical network to construct an efficient approximate algorithm called Peer-to-Peer Graph State Distribution (P2PGSD) for our graph state distribution protocols lying in the network layer, which is further improved to Spacetime Peer-to-Peer Graph State Distribution (ST-P2PGSD) using spacetime symmetry. We also adapt the previously proposed center-distributing algorithms Graph State Transfer (GST)~\cite{fischer2021distributing} for the quantum network model by proving an important and non-trivial graph theorem. A detailed comparison between the adopted algorithm named Modified Graph State Transfer (MGST) and the new algorithms is performed to demonstrate their advantageous region.  To our knowledge, this is the first work to comprehensively study the distribution of any graph states in any network topology with a well-defined mathematic model, concrete protocols, and metrics incorporating the random nature of a quantum network.

Our contributions are listed below.
\begin{enumerate}
    \item \textbf{Rigorous Mathematical Model.} We develop a concrete mathematical framework for the problem of distributing any graph state across any network topology using fusion operation, upon which the intractability under various conditions and optimization goals is proven.
    \item \textbf{New Efficient Distribution Algorithms.}  Two new decentralized algorithms are developed to solve the above problem called P2PGSD and ST-P2PGSD based on the classical idea of Peer-to-Peer transfer, the performance of which is demonstrated to have great advantages when the graph is sparse compared to the updated centralized algorithms MGST. We also improve and develop new recovery algorithms for basic Bell pair routing.
    \item \textbf{Powerful Network Memory Management Method.} The tool named the space-time quantum network, which utilizes the symmetry between space and time, is proposed to analyze memory consumption in distributed quantum systems. We show its strong versatility and convenience in analyzing memory consumption by employing it to design algorithms and prove the intractability of the memory optimization problem.
\end{enumerate}

The remainder of this paper is organized as follows. The basic principles of graph state manipulation, the mathematical model of the network, and the main theorems are presented in Section \ref{sec:backgroundAMoti}, after which the main algorithms and protocols designs are introduced in Section \ref{sec:algorithms}. Then a numerical simulation is performed, the results of which are analyzed in Section \ref{sec:evalution}. We also discuss the immediate application by offering a new framework for distributing any quantum computation as well as a concrete example in Section \ref{sec:applitions}. Section \ref{sec:relate} provides the summary for the comparison between our work and other work concerning entanglement routing, and we conclude this work in Section \ref{sec:conclusion}.

\section{BACKGROUND AND MOTIVATION}
\label{sec:backgroundAMoti}
\subsection{Graph State}
A quantum graph state $\ket{G}$ is a specific type of multipartite entanglement state that is intrinsically associated with a graph $G=(V,E)$, where $V$ is the set of vertices and $E$ is the set of edges connecting these vertices. In graph $G$, a vertex $v$ represents a physical qubit, whereas an edge $e$ indicates the entanglement between the two qubits.
The formal definition of a graph state $\ket{G}$ is given by:
\begin{equation}
    \ket{G} = \prod_{(i,j)\in E}{CZ_{(i,j)}\ket{+}^{\otimes  V }}
\end{equation}
Here, $\ket{+}^{\otimes  V }$ denotes the tensor product of $\lvert V \rvert$ copies of $\ket{+}$ state at all the vertices of $G$, and $CZ_{(i,j)}$ is the controlled-Z gate operation applied to the pair of qubits corresponding to the edge $(i,j) \in E$. 
Typically, the generation of a graph state is achieved through two steps:
1) Prepare a set of qubits in initial states, corresponding to all vertices in $V$; 
2) For each pair of qubits where the entanglement is desired, apply a CZ gate to create interaction. 

\subsection{Principles of P2P Graph State Distribution}\label{sec:principle}
In this paper, we will provide a detailed definition of the problem under consideration. In the remainder of this paper, we will use nodes and channels to describe the vertices and edges of the network topology, while the vertices and edges of the graph of the graph state. To begin with, the network model under consideration is defined by a tuple as
\begin{equation}
\label{eq:definitionOfNetwork}
\mathscr{N}\triangleq(G_N,W_c,W_m).
\end{equation}
Here, $G_N=(V_N,E_N)$ is an undirected graph representing the topology of the network with $V_N$ denoting the set of nodes and $E_N$ denoting the set of channels. $W_c:E_N\rightarrow\mathbb{N}^+$ is a map that represents the width of the channel such that each channel $c\in E_N$ can establish $W_c(c)$ Bell pairs per cycle time.  And $W_m:V_N\rightarrow\mathbb{N}$ is a map that represents the long-term memory available for each node. 

The graph state to be distributed is simply represented by a graph $$G_S=(V_S,E_S)$$ as defined above with $V_S$ being the set of vertices and $E_S$ the set of edges. And for distributing a graph state $G_S$ across a network $\mathscr{N}$, we want certain nodes of the network to hold a certain set of vertices (qubits) locally of the graph state at the end of the distribution protocol. This requirement is captured by a map $$\alpha_{D}:V_S\rightarrow V_N,$$ that assigns the vertex $v_s\in V_S$ to the network node $v_n=\alpha_{D}(v_s)\in V_N$. Therefore, the \textit{General Graph State Distribution Problem} denoted as $\mathscr{P}$ is defined as follows.
\begin{definition}[General Graph State Distribution Problem]
\label{def:definitionOfProblem}
A \textit{General Graph State Distribution Problem} is a triple,
\begin{equation}
\mathscr{P}\triangleq(\mathscr{N},G_S,\alpha_{D}),
\end{equation}
where $\mathscr{N},\,G_S,\,\alpha_D$ is the network triple, graph state, and assignment map as defined before in the text, respectively. 
\end{definition}

A naive distribution protocol to solve $\mathscr{P}$ is to locally create $G_S$ at a central node $n_c\in V_N$ of the network and then establish a Bell pair for each $v_s\in V_S$ between $n_c$ and ${\alpha_D(v_s)}$ for teleporting the desired qubit from the center node to the target node, which is exactly the consideration of the GST algorithm~\cite{fischer2021distributing} developed by Fischer \textit{ et al.}. However, when we reconsider this problem in a distributed manner, another naive strategy naturally arises as establishing a Bell pair for each edge $e=v_{s_1}v_{s_2}$ of the graph state between $\alpha_D(v_{s_1})$ and $\alpha_D(v_{s_2})$, allowing us to consume it for teleporting a CZ gate~\cite{jozsa2006introduction} to build the desired connection. In this case, the connections from a certain vertex can be treated as the 'files' to be distributed, which brings us to the idea of building a P2P protocol. 
\begin{wrapfigure}{r}{0.5\textwidth}
  \vspace{2mm}
  \begin{center}
      \includegraphics[width=.5\textwidth]{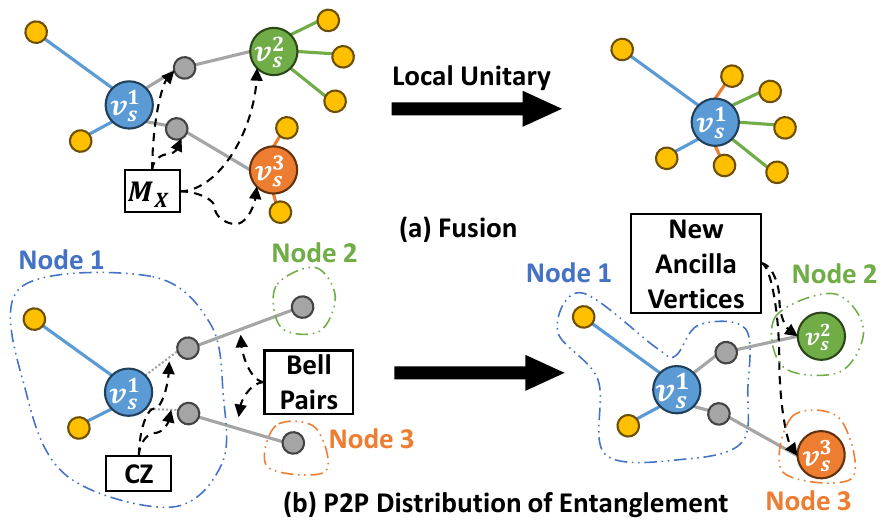}
  \end{center}
  \vspace{-2mm}
  \caption{\label{fig:fusionAP2P}The basic principle of P2P entanglement distribution. (a) Measuring vertices $v_s^2$ and $v_s^3$ and the ancilla qubits (gray circles) in such a configuration followed by local unitary resulting in the merging their connections to other vertices (yellow circles) with $v_s^1$'s. (b) Reversing the fusion operation by connecting the vertex $v_s^1$ to one of the qubits of unused Bell pairs via local CZ gates within node 1 (blue dot-dash line region) resulting in creating ancilla vertices $v_s^2$ and $v_s^3$ in node 2 (green dot-dash line region) and node 3 (orange dot-dash line region) respectively, whose connections can be inherited by a future fusion operation.}
  \Description{Fusion & P2P Distribution}
  \vspace{-2mm}
\end{wrapfigure}

The essential part of Peer-to-Peer file distribution in the classical network is that the node in the P2P network will download the parts of the file not only directly from the original server, but also from the peer nodes that have already received some parts of the file. Meanwhile, the node will also upload the received part of the file, which will help overcome the bottleneck problem in the network. However, this idea cannot be directly applied to quantum communication due to the no-cloning theorem~\cite{wootters1982single}. However, the concept of building a P2P network and not just performing routing at each peer node can be particularly helpful for multiparty entanglement distribution due to the fusion operation as shown in Fig.~\ref{fig:fusionAP2P}(a). When we reverse this procedure, it becomes a P2P scheme in the following sense: the node that established the connection (entanglement) with the destination node ($\alpha_D(v_s)$) of the certain vertex $v_s$ in the graph state can perform a local operation by creating an ancilla and use the network channel to create an ancilla vertex at another node whose connections can be inherited after a future fusion that requires only the LOCC operation, as shown in Fig.~\ref{fig:fusionAP2P}(b). 

In this sense, the entanglement between vertices of the graph state is analogous to the files to be distributed in the classical case, and since all connections to the aforementioned ancilla of a certain vertex can be inherited in the future, those distributed ancilla vertices for the same vertex can be considered as carrying the same file which can be 'copied' and 'distributed' among the network nodes. 

\subsection{Problem Formulation}
We observe that the distribution of the ancilla vertex of a certain vertex will consume an established Bell pair between the source node and the target node, which is the resource of the network. In this way, we can say that this Bell pair is assigned to this certain vertex. And since each channel can generate at most $W_c(e)$ links within a cycle time (shot), this inspires us to define the so-called \textit{One-shot} and \textit{$N$-Shot Bell Pair Strategy} as follows, where the "shot" here means the number of cycles for the network to run. 
\begin{definition}[One-shot \& $N$-Shot Bell Pair Strategy]
\label{def:oneAndNShotBellPairStrategy}
An \textit{One-shot Bell Pair Strategy} $b_1$ is a map $b_1:\vec{E}_N\times V_S\rightarrow \{+1,0,-1\}$ with $0$ representing that the channel is not used by the corresponding vertex and it fulfills the following flow conditions:
\begin{enumerate}
    \item $b_1(\vec{e},v_s)=- b_1(\cev{e},v_s)$,
    \item $\sum_{v_s\in V_S}|b_1(\vec{e},v_s)|\leq W_c(e)$,
\end{enumerate}
where $\vec{E}_N\triangleq\{(e,n_1,n_2)|e=n_1n_2\in E_N\}$ is the oriented edge set of $E_N$ and $0$ represents that the channel is not used by the corresponding vertex. And an \textit{$N$-shot Bell Pair Strategy} is a set denoted as $$\mathscr{b}_N\triangleq\{b_i|i=1,...,N\},$$ where each $b_i$ is an \textit{One-shot Bell Pair Strategy}. 
\end{definition}

Now, we may ask when a $\mathscr{b}_N$ actually solves our problem $\mathscr{P}$, that is, is a \textit{ valid solution}. This question is first addressed by tracking the set of nodes that are possible to hold an ancilla vertex for a certain vertex by the following definition of \textit{Reachable Nodes} and requiring the reachable nodes for the two vertices of any edges to overlap. 
\begin{definition}[Reachable Nodes \& Valid Bell Pair Solution]
$N_{v_s}^{\mathscr{b}_N}\subseteq V_N$ denotes the set of nodes that are reached by $v_s$ under the Bell pair strategy $\mathscr{b}_n$, which is defined in a recursive way by: $n_i\in N_{v_s}^{\mathscr{b}_N} \Leftrightarrow$ either
\begin{enumerate}
    \item $n_{i}=\alpha_D(v_s)\,\mathrm{or}$
    \item $\exists n_j(\neq n_i)\in N_{v_s}^{\mathscr{b}_N},\, b_k\in\mathscr{b}_N, \vec{c}_{ij}=(c_{ij},n_j,n_i)\in \vec{E}_N,\mathrm{s.t.}\,b_k(\vec{c}_{ij},v_s)>0$.
\end{enumerate}
And $\mathscr{b}_n$ is a \textit{Valid Bell Pair Solution} for $\mathscr{P}$ iff
\begin{equation*}
    \forall e=v_{s_1}v_{s_2}\in E_S,\, N_{v_{s_1}}^{\mathscr{b}_N}\cap N_{v_{s_2}}^{\mathscr{b}_N}\neq\emptyset.
\end{equation*}
\end{definition}
Notice that this also gives rise to an equivalent class of valid solutions denoted as $[\mathscr{b}_N]$ counting only the total flow assignment for each vertex, which leads to the following definition of \textit{$N$-Shot Bell Pair Flow Strategy}, 
\begin{definition}[$N$-Shot Bell Pair Flow Strategy]
A \textit{$N$-Shot Bell Pair Flow Strategy} $[\mathscr{b}_N]$ is an equivalent class defined by the following equivalent relation for \textit{$N$-shot Trial Solution} as
\begin{equation*}
    \mathscr{b}_N\sim\mathscr{b}'_N\Leftrightarrow \sum_{b_i\in\mathscr{b}_N}{|b_i(\vec{e},v_s)|}=\sum_{b'_i\in\mathscr{b}'_N}{|b'_i(\vec{e},v_s)|},\,\forall v_s\in V_S,\,\forall \vec{e}\in\vec{E}_N.
\end{equation*}
\end{definition}

However, $\mathscr{b}_N$ itself is not enough to solve the whole problem. The ingredient we are missing is the strategy for using long-term memory to store the necessary auxiliary vertices for the next shot. It is certainly not a trivial problem as we will show later while the developed spacetime network method will allow us to deal with memory usage in a way that is similar to the bell-pair usage and reveal the symmetry of spacetime suggested by the well-known Special Relativity. As shown in Fig.~\ref{fig:spacetime}(a), a quantum transport of a qubit often involves two events in the spacetime manifold, which can be represented as a vector $\vec{\mathscr{e}}=(\Delta x,\Delta t)$ lying inside the light cone in the two-dimensional case. This vector can be decomposed into two vectors, $\vec{\mathscr{e}}_x=(\Delta x,0)$ and $\vec{\mathscr{e}}_t=(0,\Delta t)$, aligned with the $x$-axis and the $t$-axis, respectively. If quantum teleportation is used, we may consider $\vec{\mathscr{e}}_t=(0,\Delta t)$ to represent a procedure of storing a qubit for a time $\Delta t$, while $\vec{\mathscr{e}}_x$ represents an establishment of a Bell pair over a distant $\Delta x$, which decomposes the original communication procedure into two procedures with one of which consumes a Bell pair only and one of which consumes quantum memory only. We remark that the entanglement can be in principle established outside the light cone (within the Entan. Zone) and yields meaningful applications in cryptography~\cite{mayers1998quantum} and collective decision~\cite{ding2024coordinating}. A natural observation is that the difficulty of establishing a pair of bells increases when $\Delta x$ increases, and, typically, the probability suffers from an exponential decay with $\Delta x$ governed by an attenuation factor $\alpha_x$ if no repeaters are used~\cite{RevModPhys.95.045006, li2024generalizedquantumrepeatergraph}. Thus, when a protocol consumes a Bell pair set of $\mathscr{B}=\{(b_i,\Delta x_i)\}$ with $b_i$ here indicating the number of Bell pairs established over $\Delta x_i$, an overall probability due to such consumption can be written as
\begin{equation}
    P_{tot,B}=\prod_i{\mathrm{exp}{[-\alpha_x b_i \Delta x_i]}}=\mathrm{exp}[-\alpha_x \mathscr{C}_{\mathscr{B}}],
\end{equation}
where $\mathscr{C}_{\mathscr{B}}\triangleq\sum_i{b_i\Delta x_i}$ is the \textit{Cumulative Bell Pair Usage}, which reduces to $\#\mathscr{B}$ if all $\Delta x_i$s are the same. Similarly, we can define \textit{Cumulative Memory Usage}, which has been explored both classically~\cite{alwen2015high} and quantumly~\cite{beame2023cumulative} in the previous literature, as follows,
\begin{equation}
    \mathscr{C}_{\mathscr{M}}\triangleq\sum_i{m_i \Delta t_i},
\end{equation}
with $m_i$ denotes the number of qubits that are stored for $\Delta t_i$. We note that cumulative memory usages are closely related to final state fidelity assuming no collective effects and can be considered as the effective quantum volume in some particular quantum systems~\cite{litinski2022active}.

\begin{wrapfigure}{r}{0.5\textwidth}
\vspace{-2mm}
  \begin{center}
      \includegraphics[width=.5\textwidth]{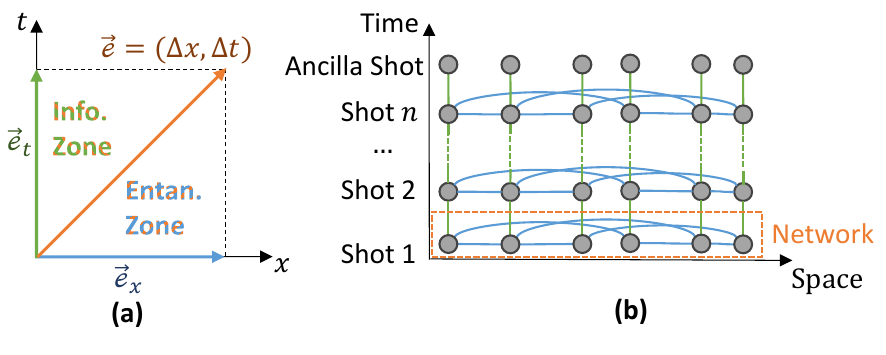}
  \end{center}
  \vspace{-2mm}
  \caption{\label{fig:spacetime}The space-time symmetry. (a) A general communication procedure within the light cone (Info. Zone) $\vec{\mathscr{e}}$ can be decomposed into a memory procedure $\vec{\mathscr{e}}_t$ and a Bell pair distribution procedure $\vec{\mathscr{e}}_x$, the latter of which is outside the light cone (Entan. Zone). (b) The construction of a space-time version of the network (orange dashed box) for $n$ shots.  Nodes are connected by either Bell pair links (blue) or memory links (green). }
  \Description{Space Time Diagram}
  \vspace{-2mm}
\end{wrapfigure}
The symmetry between space and time also inspires us to construct a space-time version of our quantum network as shown in Fig.~\ref{fig:spacetime}(b) to keep track of the Bell pair usage and memory usage simultaneously. Now each node in the space-time version of the quantum network is labeled as $(n_i,k)$ with $n_i\in V_N$ and $k$ indicating the shot coordinate. And the channel set $E_M^{(k)}=\{(n_i,k)(n_i,k+1)|n_i\in V_N\}$ representing the memory usage is also added for each $k$. If an $N$-shot solution is considered, then the node set for the space-time network becomes $V_N\times[N+1]$ and the set of Bell pairs of channels becomes $(E_N\times[N])\cup(E_M\times [N])$, where the ancillary shot is due to the fact that a protocol with $N$ shots will produce snapshots of the network $N+1$. Though this approach has been proposed in the context of the classical ad hoc routing problem within time-varying networks~\cite{zegura2004routing,zhang2019efficient}, our research expands on this by demonstrating its significant effectiveness in quantum entanglement routing. This is due to the ability to form entanglements across the light cone, leading to an ideal space-time symmetry with memory usage. Now, we can define the solution that includes the memory usage as follows with little effort.

\begin{definition}[One-shot Memory Strategy \& $N$-shot Memory Stategy]
\label{def:OANMemoryStrategy}
An \textit{One-shot Memory Strategy} $m_1$ is a map 
\begin{equation*}
    m_1:\vec{E}_M^{(k)}\times V_S\rightarrow\{+1,0,-1\}
\end{equation*}
fulfilling the following flow conditions:
\begin{enumerate}
    \item $m_1(\vec{e},v_s)=- m_1(\cev{e},v_s)$,
    \item $\sum_{v_s\in V_S}|m_1(\vec{e},v_s)|\leq W_m(e)$,
\end{enumerate}
where $\vec{E}_M^{(k)}$ is the oriented set of $E_M^{(k)}$. And the \textit{$N$-shot Memory Stategy} is defined as a collection of \textit{one-shot Memory strategy},
$$\mathscr{m}_N\triangleq\{m_k|k=1,...,N\}.$$
\end{definition}
One notices that such a definition is basically the same as the definition of \textit{Bell Pair strategy} in Def.~\ref{def:oneAndNShotBellPairStrategy}. And for a solution with an explicit memory strategy, denoted as $$\mathscr{S}_N=\{\mathscr{b}_N,\mathscr{m}_N\},$$ we will have a similar definition for when such a solution solves $\mathscr{P}$ as follows.
\begin{definition}[Reachable Nodes under $\mathscr{S}_N$ \& Valid Solution]
\label{def:reachableNodes}
$N_{v_s}^{\mathscr{s}_n}\subseteq V_N\times[N+1]$ denotes the set of nodes that are reached by $v_s$ under the solution $\mathscr{S}_N$, which is defined in a recursive way by: $n_i\in N_{v_s} \Leftrightarrow$ either
\begin{enumerate}
    \item $n_{i}=(\alpha_D(v_s),N+1)\,\mathrm{or}$
    \item $\exists n_j(\neq n_i)\in N_{v_s},\, s_k\in\mathscr{s}_N\in \mathscr{S}_N, \vec{c}_{ij}=(c_{ij},n_j,n_i)\in \vec{E}_N^{(k)}\cup\vec{E}_M^{(k)},\mathrm{s.t.}\,s_k(\vec{c}_{ij},v_s)>0$.
\end{enumerate}
where $E_N^{(k)}=E_N\times\{i\}$ represents the Bell pair channels at $k$th shot. And $\mathscr{S}_n$ is a \textit{Valid Solution} for $\mathscr{P}$ iff
\begin{equation*}
    \forall e=v_{s_1}v_{s_2}\in E_S,\, N_{v_{s_1}}^{\mathscr{s}_N}\cap N_{v_{s_2}}^{\mathscr{s}_N}\neq\emptyset.
\end{equation*}
\end{definition}

\subsection{Optimizing Objectives}
There can be many \textit{Valid Solution}s for $\mathscr{P}$, and comparing them with explicit metrics is important for us to obtain a practical and scalable protocol. The three natural metrics are the \textit{Shot Usage}, \textit{Bell Pairs Consumption} and the \textit{Long-term Memory consumption} as follows.
\begin{definition}[Shot Usage \& Bell Pair Consumption \& Long-term Memory Consumption]
\label{def:shotUsageABellPairConALongtermMemCon} For a \textit{Valid Solution} $\mathscr{s}_N$, its \textit{Shot Usage} ($\#\mathscr{S}$), \textit{Bell Pair Consumption} ($\#\mathscr{B}$) and \textit{Long-term Memory Consumption} ($\#\mathscr{M}$) are defined as follows, respectively. 
\begin{enumerate}
    \item $\#\mathscr{S}\triangleq N$.
    \item $\#\mathscr{B}\triangleq\frac{1}{2}\sum_{i\in[N]}{\sum_{\vec{e}\in \vec{E}_N^{i}}{\sum_{v_s\in V_S}{|b_i(\vec{e},v_s)|}}}$,
    \item $\#\mathscr{M}\triangleq\frac{1}{2}\sum_{i\in[N]}{\sum_{\vec{e}\in\vec{E}_M^{(i)}}{\sum_{v_s\in V_s}{|m_i(\vec{e},v_s)|}}}$.
\end{enumerate}
Here, the $\frac{1}{2}$ factor comes from the fact that each edge is counted twice in the oriented edge set.
\end{definition}
 The first metric is simply the number of cycles required, and it is also the time consumption to execute such a solution. The second is frequently considered in the previous literature~\cite{cuquet2012growth,pirker2018modular,meignant2019distributing,fischer2021distributing,xie2021graph,fan2024optimized}, while the \textit{Long-term Memory Consumption}, as we explained in the previous subsection, is just the cumulative memory~\cite{alwen2015high,beame2023cumulative} and can be considered as a counterpart of the second. We remark that some qubits during execution can be measured instantaneously, while some other qubits will be required to be kept to the next shot. The latter must be kept for at least one cycle time, where the consumption of quantum memory is necessary and the usage is counted by our defined metric.

\subsection{Complexity Analysis}
After all these definitions, our goal is to obtain an algorithm that consumes as few resources as possible. It is obvious that these three resources cannot be optimized simultaneously. Even worse, we present the following theorem which will be proven in the Appendix.~\ref{sec:proofs} that there are no efficient algorithms to minimize any of these three metrics for a general problem $\mathscr{P}$.
\begin{theorem}[Intractability]
Given a problem $\mathscr{P}$, minimizing $\#\mathscr{S}$, $\#\mathscr{B}$ or $\#\mathscr{M}$ are all NP-Hard even if $\alpha_D$ is injective and $G_S$ is connected.
\end{theorem}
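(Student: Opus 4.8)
The plan is to prove the three lower bounds separately, by three polynomial-time many-one reductions from classical NP-complete problems, in each case driving exactly one of $\#\mathscr{S}$, $\#\mathscr{B}$, $\#\mathscr{M}$ while making the other two resources ``free'' so that they never bind. Two gadgets will recur. First, to honor the hypotheses that $\alpha_D$ is injective and $G_S$ is connected, I attach to the hard core of each instance an \emph{inert scaffold}: auxiliary graph-state vertices placed on a chain of fresh network nodes and linked to the core, realized entirely on dedicated width-$1$ channels. The scaffold makes $G_S$ connected and $\alpha_D$ injective, yet, because its own graph-state edges saturate its channels, it leaks no extra routing capacity to the core. Second, on whichever resource is not being minimized I set the widths $W_c$ or memories $W_m$ generously, so only the targeted metric is constrained.

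For $\#\mathscr{B}$ I reduce from (graph) \textsc{Steiner Tree}: given $H$, terminals $R$, budget $B$, take $G_N=H$ with generous widths and memory, use a single shot, and let $G_S$ be a star whose center is assigned to one terminal and whose leaves are assigned injectively to the remaining terminals. By Definition~\ref{def:reachableNodes} the reachable set of a vertex in one shot is exactly the connected subgraph it spans with positive flow, so validity forces the center's subgraph to meet every leaf. Keeping the leaves fixed makes the center span a subtree through all of $R$, and a short exchange argument shows that moving a leaf can never reduce the total flow below the Steiner optimum. Hence the minimum $\#\mathscr{B}$ equals the minimum Steiner tree size, and the reduction is immediate.

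For $\#\mathscr{S}$ I reduce from undirected \textsc{Edge-Disjoint Paths}: given $H$ and demand pairs $\{(s_i,t_i)\}$, set $G_N=H$ with every width equal to $1$ and let the demand edges of $G_S$ have endpoints assigned to $s_i$ and $t_i$. Since every channel has width $1$, the subgraphs spanned by two distinct graph-state vertices must be globally edge-disjoint, so a one-shot valid solution exists precisely when the $s_i$--$t_i$ demands admit edge-disjoint connections. I connect $G_S$ with a caterpillar-style inert scaffold whose spine and spokes occupy fresh width-$1$ channels; because those channels are consumed by the scaffold's own edges, in any one-shot solution the demand commodities are confined to $H$. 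Consequently the minimum $\#\mathscr{S}$ equals $1$ iff the \textsc{Edge-Disjoint Paths} instance is feasible, which settles NP-hardness of shot minimization.

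For $\#\mathscr{M}$ I exploit the space-time network, in which memory channels along the time axis are structurally identical to Bell-pair channels along space. The crucial difficulty here is that memory is not independently tunable: a valid one-shot solution consumes no long-term memory, so memory can be forced to be nonzero only as a by-product of Bell-pair congestion. I therefore again fix all widths to $1$ so the demands cannot be served in a single shot and must be serialized, each delayed commodity paying memory in proportion to how long it is held. Reducing from \emph{maximum} \textsc{Edge-Disjoint Paths}, I arrange a congested gate so that the cumulative memory counts exactly the commodities that congestion forces to wait; minimizing it amounts to maximizing what can be routed early, which is NP-hard. I expect this case to be the main obstacle, for two reasons: co-designing $W_c$ and the bottleneck so that the cheapest storage schedule faithfully tracks the hard optimum (rather than being short-circuited by the freedom that multiple shots and intermediate storage afford), and, pervasively across all three reductions, establishing the \emph{no-shortcut saturation lemma} that the inert scaffold adds no effective capacity to the core commodities, which is what makes the reachable-set semantics localize each commodity as intended.
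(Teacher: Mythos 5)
Your reductions for $\#\mathscr{B}$ and $\#\mathscr{S}$ are essentially the paper's: \textsc{Steiner Tree} with a star graph state whose center drives the flow, and one-shot feasibility on width-$1$ channels encoding \textsc{Edge-Disjoint Paths}. The only cosmetic difference is how the side conditions are met: the paper attaches a pendant ancilla node to each terminal occurrence (which simultaneously restores injectivity of $\alpha_D$ when demand pairs share endpoints) and then chains one vertex per pair together, whereas you bolt on an external saturated scaffold. Your version still owes two small debts: the ``no-shortcut saturation lemma'' you flag yourself, and the injectivity of $\alpha_D$ on the \emph{core} demand vertices when a network node is a terminal of several demands --- the paper's pendant-node gadget handles exactly this, and you should adopt it or restrict to the distinct-terminal variant of \textsc{EDP}.

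The genuine gap is in the $\#\mathscr{M}$ case. Your plan --- engineer a congestion gadget so that cumulative memory counts the commodities forced to wait, and reduce from maximum \textsc{Edge-Disjoint Paths} --- is only a sketch, and you concede that co-designing the widths and the bottleneck so the cheapest storage schedule tracks the hard optimum is unresolved. You do not need any of this. Under the paper's accounting every one of the $|V_S|$ destination qubits must be held in long-term memory for at least one cycle, so $\#\mathscr{M}\geq|V_S|$ for \emph{every} valid solution, and equality holds precisely when $\mathscr{P}$ admits a one-shot solution (a one-shot solution stores only the $|V_S|$ final qubits for one cycle each; any solution with $N\geq 2$ necessarily exceeds the bound). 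Hence deciding whether the minimum of $\#\mathscr{M}$ equals $|V_S|$ is the same decision problem as one-shot feasibility, which your own $\#\mathscr{S}$ reduction already shows to be NP-hard. Your premise that ``a valid one-shot solution consumes no long-term memory'' is false in this model, and it is exactly that nonzero floor that makes the easy argument work; replace your third reduction with this observation and the proof closes.
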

The aforementioned theorem is important in the sense that it forces and also rationalizes it to pursue a heuristic algorithm that consumes low resources with relatively low runtime, since it is nearly impossible to have a scalable and practical one that converges to the global minimum unless $P=NP$. However, we get a tight upper bound for $\#\mathscr{S}$ in the case where there is unlimited memory. Here, "tight" means that we construct an instance $\mathscr{P}$ that requires at least the upper bound number of shots to distribute using any algorithm. This result will help us analyze the performance of the proposed algorithms in the worst case and also ensures that the algorithms obtained are within a polynomial approximation (linear to optimize $\#\mathscr{S}$) and thus reliable.
\begin{theorem}[Upper-bound of $\#\mathscr{S}$ Given Unlimited Memory]
Given a problem $\mathscr{P}$, where $G_N$ is connected and $W_m(n_i)=\infty\,,\forall n_i\in V_N$, then $\#\mathscr{S}$ has a tight upper-bound of $\lfloor\frac{|V_S|}{2}\rfloor$.
\end{theorem}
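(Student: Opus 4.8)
The plan is to prove the bound and its tightness separately. For the upper bound I would show that every admissible $\mathscr{P}$ admits a valid solution that gathers an ancilla of every graph-state vertex at a single, well-chosen network node. Concretely, fix a spanning tree $T$ of the connected graph $G_N$, weight each node $n\in V_N$ by $|\alpha_D^{-1}(n)|$, and let $r$ be a weighted centroid of $T$, i.e.\ a node whose removal splits $T$ into components each of total weight at most $\lfloor |V_S|/2\rfloor$; such an $r$ always exists. Once every $v_s$ is reachable at $(r,N+1)$, each edge $e=v_{s_1}v_{s_2}\in E_S$ is resolved there, so $N_{v_{s_1}}^{\mathscr{s}_N}\cap N_{v_{s_2}}^{\mathscr{s}_N}\ni(r,N+1)$ and the solution is valid by Definition~\ref{def:reachableNodes}.

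The gathering is then scheduled component by component in parallel. The key observation is that the reachability rule of Definition~\ref{def:reachableNodes} lets reachability propagate through several Bell-pair channels within one shot, because the edges of $\vec{E}_N^{(k)}$ join nodes of the same time layer $k$; hence positive flow along an entire tree path at layer $k$ (physically, Bell pairs on every edge of the path swapped into one end-to-end link) carries an ancilla from its home node straight to $r$ in a single shot. Since the flow conditions of Definition~\ref{def:oneAndNShotBellPairStrategy} impose only antisymmetry and the per-channel capacity $\sum_{v_s}|b_i(\vec{e},v_s)|\le W_c(e)$, and since paths drawn from distinct components of $T-r$ are edge-disjoint, I can move one ancilla out of each component per shot while respecting every $W_c(e)\ge 1$; the unlimited memory then carries each arrived ancilla forward to the top layer $N+1$. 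A component holding $s$ vertices is thus emptied in $s$ shots, so running all components simultaneously finishes after a number of shots equal to the weight of the heaviest component, which is at most $\lfloor |V_S|/2\rfloor$.

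For tightness I would exhibit a single instance meeting the bound: let $G_N$ consist of two nodes $n_1,n_2$ joined by one channel with $W_c=1$ and $W_m=\infty$, and let $G_S$ be the complete graph on $V_S$ split as evenly as possible, with $\lceil |V_S|/2\rceil$ vertices assigned to $n_1$ and $\lfloor |V_S|/2\rfloor$ to $n_2$. Writing $A$ for the set of $n_1$-vertices reaching $n_2$ and $B$ for the set of $n_2$-vertices reaching $n_1$, the only meeting nodes available are $n_1$ and $n_2$, so every cross edge forces its $n_1$-endpoint into $A$ or its $n_2$-endpoint into $B$; a short case analysis shows that either $A$ is all of the $n_1$-side or $B$ is all of the $n_2$-side. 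In either case at least $\min(\lceil |V_S|/2\rceil,\lfloor |V_S|/2\rfloor)=\lfloor |V_S|/2\rfloor$ ancillas must cross the single channel, and since the capacity constraint permits total flow at most $W_c=1$ across it per shot, no valid solution can use fewer than $\lfloor |V_S|/2\rfloor$ shots.

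I expect the delicate point to be the achievability schedule---in particular, the argument that a whole tree path is traversable within one shot, so that the shot count is governed by the edge-congestion of the routing (capped at $\lfloor |V_S|/2\rfloor$ by the centroid) rather than by path length, which would otherwise introduce pipeline-fill delays and break the bound on long, thin networks. The supporting facts---existence of a weighted centroid with the claimed split, edge-disjointness of the per-component paths, and the persistence of gathered ancillas in unlimited memory---are then routine.
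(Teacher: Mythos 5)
Your proof is correct, but it reaches the bound by a different route than the paper. The paper also funnels every vertex to a single root, but it starts from the $|V_S|$-shot flow solution produced by MGST and then performs iterative flow surgery: whenever some channel carries flow exceeding $\lfloor |V_S|/2\rfloor$, the root is moved to the downstream endpoint of that channel and a suitable subset of the paths through it is reversed while the rest are removed, and this local improvement is repeated until every channel carries at most $\lfloor |V_S|/2\rfloor$ flow. Your argument replaces that iteration with a single invocation of the weighted-centroid lemma on a spanning tree of $G_N$ (weights $|\alpha_D^{-1}(n)|$), which certifies the congestion bound in one step and yields an explicit parallel schedule --- one ancilla per component of $T-r$ per shot, with edge-disjointness across components and unlimited memory carrying arrivals to layer $N+1$. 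In effect the paper's root-relocation procedure is an implicit local search for exactly the centroid you name explicitly; your version is more self-contained (it does not lean on MGST or on the simply-executable path decomposition theorem), makes the provenance of the $\lfloor |V_S|/2\rfloor$ constant transparent, and directly supplies the memory strategy, at the cost of having to argue the centroid existence and the per-shot traversability of a whole tree path --- the latter being the delicate point you correctly flag, and which the reachability definition (flow propagating through several channels within one time layer) does support, just as the paper's simply-executable paths do. The tightness constructions are essentially the same single width-one bottleneck: the paper splits a perfect matching across it, you split a complete graph, which costs you the short $A$/$B$ case analysis but gives the same $\lfloor |V_S|/2\rfloor$ forced crossings at one per shot.
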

In addition, though the hardness of minimizing $\#\mathscr{M}$ comes from the hardness of minimizing $\#\mathscr{S}$ and one may want to find a $\mathscr{S}_N$ that minimizes $\#\mathscr{M}$ after finding $\mathscr{s}_N$. We can show that there is also no efficient algorithm to do so by the following theorem, which utilizes the developed space-time network method. This not only justifies our early statement that optimizing the memory strategy is a highly nontrivial problem, but also demonstrates the power of the developed spacetime network method. The proofs for these theorems are also included in the Appendix.~\ref{sec:proofs}.
\begin{theorem}[Intractability Given $\mathscr{b}_N$]
Given $\mathscr{b}_N,\,N\geq2$ which is a \textit{Valid Solution} for a problem $\mathscr{P}$, finding $\mathscr{m}_N$ s.t. $\mathscr{S}_N=(\mathscr{b}_N,\mathscr{m}_N)$ that solves $\mathscr{P}$ with minimum $\#\mathscr{M}$ is NP-Hard.
\end{theorem}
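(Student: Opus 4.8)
The plan is to establish NP-hardness by a polynomial-time reduction from \textsc{Set Cover}. The first move is to reinterpret the residual problem once $\mathscr{b}_N$ is frozen: within each shot $k$ the Bell-pair channels $E_N^{(k)}$ that $\mathscr{b}_N$ devotes to a vertex $v_s$ induce a fixed ``free'' spreading among the space-time nodes $\{(n_i,k)\}$, and the only remaining degrees of freedom are the vertical memory edges $E_M^{(k)}$, each of which contributes exactly one unit to $\#\mathscr{M}$ for the vertex that activates it. Hence, for each $v_s$ the reachable set $N_{v_s}^{\mathscr{s}_N}$ is grown from the seed $(\alpha_D(v_s),N+1)$ by alternating free horizontal (Bell-pair) spreading inside a shot with paid vertical (memory) hops between shots, and the task is to choose these vertical edges so that every $e=v_{s_1}v_{s_2}\in E_S$ obtains a common space-time node, while minimizing the total number of chosen vertical edges summed over all vertices. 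In this form the problem is a minimum-weight layered connectivity (Steiner-type) problem, which is the source of the hardness.

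I would carry out the reduction at $N=2$, so that the space-time network has three shot layers, with Bell pairs living at shots $1,2$ and the seeds at shot $3$. Given a \textsc{Set Cover} instance with universe $U=\{u_1,\dots,u_n\}$ and sets $S_1,\dots,S_m$, introduce a root vertex $r$ with $\alpha_D(r)=\rho$, one network node $\sigma_j$ per set, one node $\epsilon_i$ per element, and for each element a companion vertex $r_i$ with $\alpha_D(r_i)=\epsilon_i$ together with an edge $r r_i\in E_S$. The fixed $\mathscr{b}_N$ places shot-$2$ Bell pairs $(\rho,2)(\sigma_j,2)$ for every $j$ and shot-$1$ Bell pairs $(\sigma_j,1)(\epsilon_i,1)$ whenever $u_i\in S_j$, and assigns no Bell pairs to any $r_i$. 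The intended behaviour is then that, after the baseline memory hop $3\!\to\!2$ at $\rho$, the root reaches every $(\sigma_j,2)$ for free; activating a single memory hop $2\!\to\!1$ at $\sigma_j$ (one unit of $\#\mathscr{M}$) plays the role of ``selecting $S_j$'' and lets $r$ reach, for free, every $(\epsilon_i,1)$ with $u_i\in S_j$. Each companion $r_i$ can only reach the column $\{(\epsilon_i,k)\}$ (by the baseline hops $3\!\to\!2\!\to\!1$ at $\epsilon_i$), so the overlap required by the edge $r r_i$ forces $r$ to reach $(\epsilon_i,1)$, i.e.\ to select a set covering $u_i$. Consequently $\#\mathscr{M}$ equals a fixed baseline plus the number of selected sets, and its minimizer is exactly a minimum set cover.

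The verification then proceeds in the usual steps: confirm the construction is polynomial in $n+m$ and choose the widths $W_c,W_m$ large enough that the flow constraints of the strategy definitions are never binding; check that the exhibited $\mathscr{b}_N$ is a \textit{Valid Bell Pair Solution} in the shot-agnostic sense, which holds because collapsing the shots connects $r$ to every $\epsilon_i$ through some covering set (using $\bigcup_j S_j=U$) while each $r_i$ reaches $\epsilon_i$, so every edge overlaps; and finally prove the two directions relating a cover of size $B$ to a memory strategy with $\#\mathscr{M}=\mathrm{baseline}+B$ and back, the converse reading off the selected sets from the activated hops at the $\sigma_j$. The step I expect to be the main obstacle is the soundness (``no cheating'') direction: I must prove that in the space-time network no overlap for an edge $r r_i$ can be realized more cheaply than through a genuine covering hop, i.e.\ that $r$ cannot meet the companion column at shots $2$ or $3$ and that there is no alternative vertical route to $(\epsilon_i,1)$ avoiding the $\sigma_j$ gadget. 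This requires a careful monotonicity analysis of how $N_{v_s}^{\mathscr{s}_N}$ can grow under the fixed Bell-pair layout, after which padding the construction to any $N\ge 2$ (by adding inert top layers that only raise the baseline) completes the proof.
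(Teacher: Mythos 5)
Your reduction is correct in outline and establishes the theorem, but it takes a genuinely different route from the paper. The paper also fixes $N=2$, takes $G_S$ to be a star, and freezes all Bell-pair flow on the center vertex so that the only paid resource is the memory links --- but it then reduces from the \textsc{Steiner Tree} problem, embedding the Steiner instance's graph $G$ directly into the two-shot space-time network via a per-edge gadget (two ancilla nodes per edge, Bell pairs tying each endpoint to its ancilla in shot $2$ and tying the two ancillas together in shot $1$), so that using an edge of the Steiner instance corresponds bijectively to paying for exactly two ancilla memory links. Your reduction instead goes from \textsc{Set Cover} with a bipartite root--set--element gadget in which one memory hop at $\sigma_j$ plays the role of selecting $S_j$. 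Both are ``layered Steiner-type'' arguments, and indeed you correctly identify that the residual problem, once $\mathscr{b}_N$ is frozen, is a minimum-weight layered connectivity problem; the paper's version has the advantage of making the space-time symmetry between $\#\mathscr{B}$ and $\#\mathscr{M}$ explicit (it literally reuses the Steiner-tree reduction from its $\#\mathscr{B}$ hardness proof with edges replaced by memory links), while yours is more self-contained and makes the accounting (baseline $+$ cover size) very transparent.

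On the obstacle you flag --- soundness, i.e.\ that $r$ cannot reach an element column more cheaply than through a covering hop --- the resolution is already built into the reachability definition and you should make it explicit. The worry is backward leakage: if $r$ reaches $(\epsilon_{i'},1)$ via $\sigma_{j'}$ and $u_{i'}\in S_j\cap S_{j'}$, could $r$ traverse the Bell pair on $(\sigma_j,\epsilon_{i'})$ in reverse to reach $(\sigma_j,1)$ for free, thereby unlocking all of $S_j$ with a single paid hop? If so, one hop would cover an entire connected component of the set-intersection graph and the reduction would collapse. But Definition~\ref{def:reachableNodes} only propagates reachability along an oriented edge $\vec{c}_{ij}=(c_{ij},n_j,n_i)$ when $s_k(\vec{c}_{ij},v_s)>0$, and $b_1$ is antisymmetric; so by fixing the frozen flows with orientation away from the root ($b_2((\cdot,\rho,\sigma_j),r)=+1$ and $b_1((\cdot,\sigma_j,\epsilon_i),r)=+1$), the reverse traversal has value $-1$ and is blocked. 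With that choice the only entry into $(\sigma_j,1)$ is the memory hop at $\sigma_j$, the only entry into the column of $\epsilon_i$ is through some selected $\sigma_j$ with $u_i\in S_j$, and the column itself forces exactly two memory-link uses to meet the seed of $r_i$ at shot $N+1$; hence $\#\mathscr{M}$ is a fixed baseline plus the number of selected sets, completing the argument. You should also state explicitly that this orientation choice still leaves $\mathscr{b}_N$ a \emph{Valid Bell Pair Solution} (it does, since the shot-agnostic reachability uses the same orientations).
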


\section{ALGORITHMS AND PROTOCOLS}\label{sec:algorithms}
\subsection{Network Model and Routing Protocol}\label{subsec:networkModel}
Our use basically the same network model as the one used by Shi \textit{et al.}~\cite{shi2020concurrent} with slight modification, which establishes an abstraction and corresponds well to the physical quantum devices: 
\begin{enumerate}
    \item The whole network is loosely synchronized in time slots (shot) with duration $T_{\mathrm{shot}}$. We also assume that the information can travel through the whole network within a time slot. 
    \item  The high-quality Bell pair is created in a heralded manner (it can be an abstraction of the physical devices that generate high-quality Bell pairs from multiple tries and entanglement purification, including the quantum repeater\cite{Bennett1996,bernien2013heralded,pirandola2017fundamental}). A channel of width $W_c(e)$ with $e\in E_N$ is capable of creating at most $W_c(e)$ Bell pairs within a time interval, each with a probability of $P_c(e)$.
    \item  All single-qubit operations are assumed to be deterministic, but the only two-qubit operation, $CZ$ operation, is considered probabilistic with probability $P_s$ but in a heralded way~\cite{li2020photon,sung2021realization}.
    \item The short-term memory (qubits are measured almost instantaneously) for each node is assumed to be infinite, while the long-term memory (which can last for at least $T_{\mathrm{Shot}}$) for each node is assumed to be limited by $W_m(n)$ with $n\in V_N$~\cite{heshami2016quantum,hann2019hardware,gu2024hybrid}. The read and write operations are assumed to be deterministic. 
\end{enumerate}
In addition, the distribution of entanglement, as a protocol in the network layer, is also carried out in a cyclic four-phase manner and is based on the previous Bell pair concurrent routing protocol~\cite{shi2020concurrent} as follows:
\begin{enumerate}
    \item \textbf{Synchronization}. The network performs synchronization and uses the information from all the nodes needed to perform local single-qubit operations to complete the required distribution. New requests for the graph state distribution are also sent through the entire network. 
    \item \textbf{Generate Shot Plan}. Use the network topology (which is known to all nodes in advance) to generate the routing routes, assign the qubit, and distribute the Bell pairs between neighbors.
    \item  \textbf{Exchange Link State}. Exchange information on the status of the link among the nodes within $k$ hops of the planned path. $k$ is limited by the duration of short-term memory.
    \item \textbf{Internal Link}. Perform the recovery using the information from the link state and perform the required connection operation using the CZ gate. 
\end{enumerate}
In the second phase, the input to the network to establish Bell pairs is a list of routing routes with index $i$ representing the priority order in a descending way. This is the fundamental strategy for resource allocation of the network: The available resources will be assigned to the path with the higher priority first. 

The discussion on \textit{Cumulative Bell Pair Usage} in the last section inspires us to define the cost of using a Bell pair set $\mathscr{B}=\{(b_i,P_i)\}$ established by the network with $P_i$ representing the probability of success as
\begin{equation}
\label{eq:metric}
    C_{\mathscr{B}}\triangleq -b_i\sum_i \log(P_i).
\end{equation}
 Based on our resource allocation policy, when a width $w$ channel having $o$ width occupied by higher priority paths, the probability $P_c(e|o)$ of such channel $e$ being successful in establishing a new Bell pair for new path can be written as
\begin{equation}
\label{eq:newBellPair}
    P_c(e|o)=P(\#\mathrm{Bell}\geq o+1)=\sum_{i=o+1}^{w}{\binom{n}{i}  P_c^i(e)(1-P_c(e))^{w-i}}
\end{equation}
One can also view it as assigning different probabilities to the channels represented by the parallel edges in a multi-graph. The cost of using each $CZ$ gate can also be represented by $-\log(P_s)$. In this way, the cost metric is defined in an addable manner, and a normal Dijkstra algorithm can be performed to find the optimal path between any two pairs. When all the main paths are found, the additional resources can be used to construct recovery paths. We follow a similar idea as Shi \textit{et al.} but we put forward a new recovery algorithm called the Expected Union Method (EUM), which utilizes the Minimum Cost Maximum Flow~\cite{kiraly2012efficient} algorithm to determine the behavior of the node along each main path, which is further improved to the Spacetime Expected Union Method (ST-EUM) by allowing the path to recovery using the memory links by the spacetime network method. The details of the recovery algorithms are discussed in Appendix.~\ref{sec:rec}.

\subsection{Modified Graph State Transfer Algorithm}
\begin{wrapfigure}{r}{0.5\textwidth}
  \vspace{-2mm}
  \begin{center}
      \includegraphics[width=.5\textwidth]{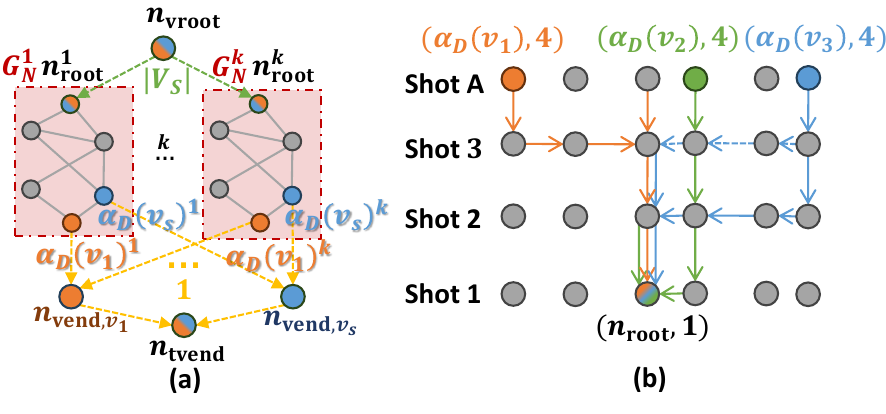}
      
  \end{center}
  \vspace{-4mm}
  \caption{\label{fig:MGST}Illustration for MGST algorithm. (a) The constructed flow graph with the subgraphs inside the red shadow box represents one of the copies of $G_N$ labeled as $G_N^i,\,i=1...k$ together with virtual nodes (larger circles) and virtual arcs (dotted arrows) with the numbers in different colors representing the width of the corresponding channel. (b) An example of $3$-shot solution of MGST. The nodes with multiple colors are the root. The paths in different colors represent the edge-disjoint spacetime path from $\{(\alpha_D(v_s),4\}$ to the root node $(n_{\mathrm{root}},1)$. The blue dotted arcs show the possible recovery possible if the current path fails.}
  \Description{MGST Illustration}
  
\end{wrapfigure}
The GST algorithm proposed by Fischer \textit{et al.}~\cite{fischer2021distributing} is an efficient algorithm but it only gives us an equivalent class $[\mathscr{b}_N]$ that solves $\mathscr{P}$. We need to pick an element $\mathscr{b}_N$ from $[\mathscr{b}_N]$ as well as find a $\mathscr{m}_n$ to form the solution $\mathscr{S}_N=\{\mathscr{b}_N,\mathscr{m}_N\}$ so that our network can execute the solution. One can certainly use the flow decomposition theorem to decompose $[\mathscr{b}_N]$ into a set of paths; however, it is unknown whether such paths can be executed in $N$ shots without conflicting or taking apart specific paths into several shots (which we call it \textit{simply executable}) is unknown. In the worst case, we may have to consume a lot of memory to execute a single path part-by-part across multiple shots, which is definitely unwanted. Hence, how to decompose the flow into a $N$-shot \textit{ simple executable} path set is not a trivial problem. We propose the following novel construction of the flow graph to find the $|V_N|$ edge-disjoint paths from a given root $n_{\mathrm{root}}\in V_N$ executable in $k$ shots:
\begin{enumerate}
    \item Generate the initial flow graph $G_{\mathrm{flow}}$ as a disjoint union of $k$ copies of $G_N$, denoted as $G_N^i,\,i=1...k$. Each node in $G_{\mathrm{flow}}$ will now be labeled with an additional index that determines its shot. Turn $G_{\mathrm{flow}}$ into a directed graph by decomposing each edge into two anti-parallel arcs.
    \item Add a virtual root node $n_{\mathrm{vroot}}$ and add arcs with width $|V_S|$ from $n_{\mathrm{vroot}}$ to all the nodes in $\{n^i_{\mathrm{root}},i=1,...,k\}$ with the superscript being the shot index.
    \item Add $|V_S|$ virtual end nodes $\{n_{\mathrm{vend},v_s},v_s\in V_S\}$ indexed by the vertices in the graph state. For each virtual end  $n_{\mathrm{vend},v_s}$, add an arc with width 1 from each element of $\{\alpha_D(v_s)^i|i=1,...,k\}$ to it.
    \item Add an ultimate virtual end $n_{\mathrm{uvend}}$ and for each virtual end $n_{\mathrm{vend},v_s}$, add an arc with width 1 from it to $n_{\mathrm{uvend}}$.
\end{enumerate}

The illustration for this flow graph is shown in Fig.~\ref{fig:MGST}(a). We remark that such a flow graph is quite similar to the space-time network graph we introduced in the previous section. After such a flow graph is constructed, we run the max-flow algorithm from $n_{\mathrm{vroot}}$ to $n_{\mathrm{uvend}}$. Finding the $|V_N|$ edge-disjoint paths that are \textit{simply executable} from a given root $n_{\mathrm{root}}$ to all the target nodes is possible if and only if the maximum flow is identical to $|V_N|$. Now, applying the flow decomposition theorem to each subgraph $G_N^i$ in the flow graph, we get the desired set of solution paths or $\mathscr{b}_k$.

However, there is one question left to be solved: Will finding such a \textit{ simple executable} path set results in the use of more shots than the flow solution given by GST. This question is answered by the following theorem proven in Appendix.~\ref{sec:proofs}.
\begin{theorem}[Simply Executable Edge-disjoint Path Set]
For all $N$-shot flow solutions for $\mathscr{P}$ given by GST, it is always possible to find a path set that is \textit{simply executable} in $N$-shot and corresponds to a valid solution $\mathscr{s}_n$.
\end{theorem}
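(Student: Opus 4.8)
The plan is to prove the claim through the flow graph $G_{\mathrm{flow}}$ built in the text with the number of copies fixed at $k=N$, by showing that its maximum $n_{\mathrm{vroot}}$-to-$n_{\mathrm{uvend}}$ flow saturates all $|V_S|$ unit-capacity terminal arcs, i.e. equals $|V_S|$. Once this value is attained, I would invoke integrality of the max-flow on an all-integer-capacity network to obtain an integral optimal flow, and then apply the flow decomposition theorem inside each copy $G_N^i$ separately. Since the copies are vertex-disjoint, every resulting path is confined to a single $G_N^i$, hence to a single shot, so no path is broken across shots; and since the intra-copy arc capacities are exactly the widths $W_c$, the paths assigned to any fixed shot respect the per-shot width bound. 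This is precisely the definition of \emph{simply executable}. To upgrade the path set to a valid solution $\mathscr{s}_N$, I would adjoin the memory strategy that retains every not-yet-teleported qubit at the root, so that the backward reachable set of each $v_s$ contains the early root nodes; then for any edge $v_{s_1}v_{s_2}\in E_S$ the two reachable sets meet at the root, verifying the validity condition of Definition~\ref{def:reachableNodes}.

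The substantive step is the lower bound: that the maximum flow reaches $|V_S|$. I would argue this by max-flow--min-cut, showing every cut between $n_{\mathrm{vroot}}$ and $n_{\mathrm{uvend}}$ costs at least $|V_S|$. The easy case is immediate: a cut that severs even one width-$|V_S|$ arc $n_{\mathrm{vroot}}\to n^i_{\mathrm{root}}$ already pays $|V_S|$, so one may assume all root copies sit on the source side. For the remaining cuts I would account for each vertex $v_s$ either through its width-one terminal arc $n_{\mathrm{vend},v_s}\to n_{\mathrm{uvend}}$ or through the width-one arcs $\alpha_D(v_s)^i\to n_{\mathrm{vend},v_s}$ together with the intra-copy channel arcs that must be cut to isolate its destinations across the copies. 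Reorganizing this bookkeeping turns the min-cut bound into a Hall-type deficiency condition: for every subset $S\subseteq V_S$, the aggregate capacity that $G_{\mathrm{flow}}$ places between the root copies and the destinations $\{\alpha_D(v_s):v_s\in S\}$ must be at least $|S|$.

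The crux, and the place where the hypothesis is consumed, is verifying this deficiency condition; it is nontrivial precisely because GST does not hand us a shot-by-shot schedule but only the aggregate flow class $[\mathscr{b}_N]$, and a naive flow decomposition of $[\mathscr{b}_N]$ can produce paths that overload one shot or that would have to be split across shots. I would characterize GST's being an $N$-shot solution as the existence of an aggregate flow of value $|V_S|$ from the single root to the destinations whose usage of each channel $e$ is at most $N\,W_c(e)$, and then convert it into an explicit feasible flow of value $|V_S|$ in $G_{\mathrm{flow}}$ by pushing the $i$-th layer of this flow through copy $G_N^i$; this conversion simultaneously certifies that the maximum flow is $\ge|V_S|$ and that the Hall-type condition holds. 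The main obstacle is establishing that such a layered, single-shot re-decomposition always exists within $N$ layers, and the key is the single-root structure of GST: all entanglement emanates from one node that stores every undelivered qubit, so memory is only ever spent delaying qubits at the root rather than transporting them, which is what prevents the per-shot re-decomposition from ever inflating the shot count. This scheduling statement is the ``non-trivial graph theorem'' on which the whole construction rests.
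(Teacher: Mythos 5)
Your overall architecture is the same as the paper's: show the max flow in $G_{\mathrm{flow}}$ equals $|V_S|$, invoke integrality of max flow on an integer-capacity network, and decompose the resulting integral flow copy-by-copy so that every path lives in a single shot and respects the per-shot widths. That framing, and the closing step of adjoining the root-storage memory strategy, are all fine. But the one step that carries the entire theorem --- certifying that the maximum flow reaches $|V_S|$ --- is exactly the step you do not prove. You reduce it first to a Hall-type min-cut condition and then to the existence of a ``layered, single-shot re-decomposition'' of GST's aggregate flow into $N$ integral layers each respecting $W_c$, and you acknowledge this scheduling claim as ``the main obstacle'' while offering only a heuristic about the single-root structure. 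As written, that is a gap: an integral per-shot re-decomposition of an aggregate flow is precisely the kind of statement that can fail for general flows (a single path in the aggregate decomposition can exceed one shot's capacity), and nothing in your sketch rules this out.

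The paper closes this gap with a much smaller observation that makes both your min-cut detour and your layering claim unnecessary: since GST obtains $[\mathscr{b}_N]_{GST}$ as a flow of value $|V_S|$ on the network with each capacity multiplied by $N$, the \emph{fractional} flow $f/N$ is feasible in every single copy $G_N^i$ (it uses at most $W_c(e)$ on each channel), and placing this identical fractional flow in all $N$ copies, with $1/N$ routed through each arc $\alpha_D(v_s)^i\to n_{\mathrm{vend},v_s}$, yields a feasible fractional flow of value $|V_S|$ in $G_{\mathrm{flow}}$. That alone certifies $\max$-flow $\ge|V_S|$; the integral flow theorem then hands you an integral max flow, and only at that point does an integral per-shot schedule appear --- as an output of the theorem rather than as a hypothesis you must verify. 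In short, you do not need to schedule the aggregate flow into layers yourself; you only need one fractional witness, and uniform splitting by $N$ provides it for free. Without that (or an actual proof of your layered re-decomposition), your argument does not go through.
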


\begin{wrapfigure}{r}{0.5\textwidth}
\vspace{-8mm}
\begin{minipage}{0.5\textwidth}
\begin{algorithm}[H]
    \caption{MGST}
    \label{alg:MGST}
    \begin{algorithmic}
       \Function{MGST}{$\mathscr{P}$} 
            \State{$k_\mathrm{min}\gets|V_S|$}
            \State{$n_{\mathrm{root}}\gets Any$}
            \State{$\mathrm{BestCost}\gets\infty$}
            \State{$\mathrm{BestSol}\gets Any$}
            \ForAll{$n_i\in V_N$} 
                \Repeat\,Binary Search on $k$ 
                    \State{$G_{\mathrm{flow}}\gets$ build\_flow\_graph($\mathscr{P}$,$n_i$,k)} 
                    \State{$\mathscr{s}_k\gets$ max\_flow($G_{\mathrm{flow}}$)}
                \Until{Exact $k$ is found}
                \State{Cost$\gets$estimate\_cost($\mathscr{s}_k$)}
                \If{$k<k_\mathrm{min}$} 
                    \State{$k_\mathrm{min},n_{\mathrm{root}},\mathrm{BestSol}\gets k,n_i,\mathscr{b}_k$}
                \ElsIf {$k=k_\mathrm{min}$ \textbf{and} {Cost$<$BestCost} } 
                    \State{$n_{\mathrm{root}},\mathrm{BestSol}\gets n_i,\mathscr{b}_k$}
                \EndIf
                
            \EndFor
            \State{PathSet$\gets$}flow\_decomposition\_theorem($\mathscr{s}_k$)
            \State \Return PathSet
        \EndFunction
    \end{algorithmic}
\end{algorithm}
\end{minipage}
\vspace{-2mm}
\end{wrapfigure}

This modified version of the GST algorithm, MGST, will use the same searching strategy to decide the best root node, as well as the required shot usage in the ideal case. The pseudocode for the complete MGST algorithm is shown in Alg.~\ref{alg:MGST}. The total run time for this algorithm is bounded in the upper order by $O(|V_N|^3|V_S|^3 \max(|E_N|,|V_S|)\log|V_S|)$ using the shortest augmenting path algorithm~\cite{dinic1970algorithm} for max flow.

Now, since the set of solution paths is \textit{ simply executable}, the memory strategy is also quite simple: the root node will store the desired vertex until the path to its destination is expected to be established in this shot. The qubit at the root node will be stored for an additional shot for protection and will be measured for teleportation if the path has been confirmed to be established. If the path fails, a future distribution will be possible and it will not need to restart the whole distribution process. The illustration of this process is shown in Fig.~\ref{fig:MGST}(b). After all, if the desired graph state is established successfully in $N$-shot in the ideal case, the \textit{Memory Consumption} for this protocol is simply calculated as
\begin{equation}
\label{eq:MCMGST}
\#\mathscr{M}_{\mathrm{MGST}}=(N+1)|V_S|-|\{v\in V_s|\alpha_D(v)=n_{\mathrm{root}}\}|,
\end{equation}
where $n_{\mathrm{root}}$ is the root node for distributing such a graph state.

\subsection{Peer-to-Peer Graph State Distribution Algorithm}

 We see that MGST is limited to a special class of solution where edge-disjoint paths are established for each vertex from the assigned node to the center node, or in other words, the channels along these paths are assigned to the specific vertex as shown in Fig.~\ref{fig:MGST}(b). The distribution solution given MGST is completely independent of the topology of the graph state, which can be a merit but can also lead to huge resources being wasted if the graph state is relatively sparse (or has a low degree); an extreme case in point is the star graph. Any star graph can be one-shot distributed in any connected network by assigning all the channels in the oriented spanning tree to the center vertex, or, mathematically,
\begin{equation}
\label{eq:starGraphP2P}
     b_1(\vec{e},v_{\mathrm{center}})=1,\,\vec{e}\in \vec{T}_{\mathrm{spans}\,\{\alpha_D(v_s)|v_s\in V_S\}}
\end{equation}
\begin{wrapfigure}{r}{0.5\textwidth}
\vspace{-2mm}
  \begin{center}
      \includegraphics[width=.5\textwidth]{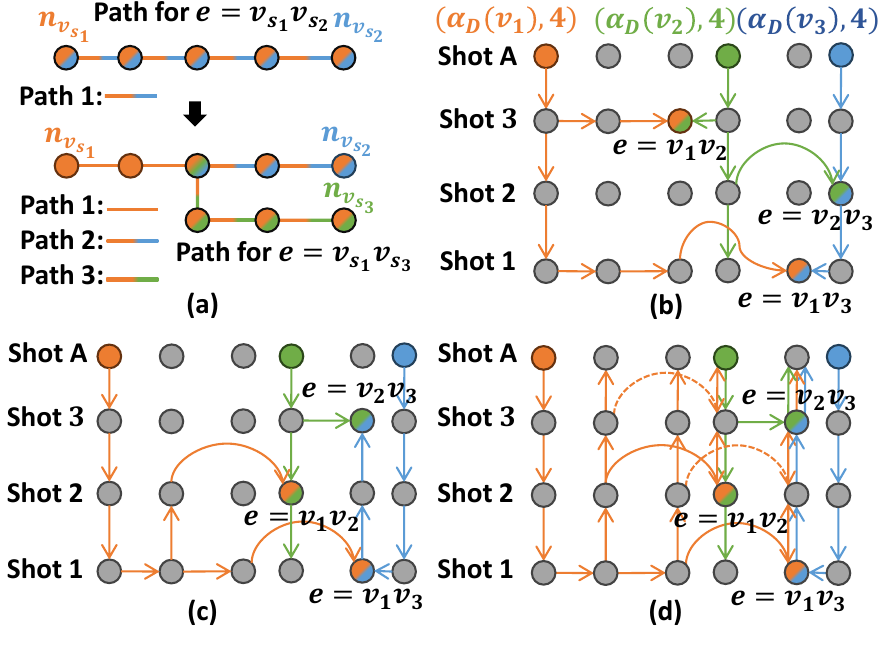}
  \end{center}
  \vspace{-2mm}
  \caption{\label{fig:P2PGSD}Illustration for P2PGSD algorithm. (a) When the path between $n_{v_{s_1}}$ (yellow) and $n_{v_{s_2}}$ (blue) is first found, the assignment of the Bell pairs (yellow \& blue) is not determined. After the new path to $n_{v_{s_3}}$ (green node) is found from an 'unfixed' node along Path 1, part of the Path is fixed and it is split into the new Path 1 and Path 2, while Path 3 is not fixed (yellow \& green). (b) The P2PGSD solution under \textbf{Minimum} memory strategy. The half-and-half colors indicate the node that implements the edge for the corresponding vertices. (c) The P2PGSD solution under \textbf{Standard} memory strategy. (d) The P2PGSD solution under \textbf{Maximum} memory strategy. This might be helpful for later trials (dotted arcs) in case the previous trial fails. }
  \Description{P2PGSD Illustration}
  \vspace{-4mm}
\end{wrapfigure}
In this case, the 'files' to be distributed are the ancilla vertices of the center nodes, which can be regenerated and redistributed in a P2P way at each received node. The Peer-to-Peer Graph State Distribution (P2PGSD) algorithm presents a completely distributed way to distribute the desired graph state across the network utilizing this property. Concretely, the ancilla vertices (the edges) in the graph state are treated as the 'files', and the inversion of merging shown in Fig.~\ref{fig:fusionAP2P}(b) is used to generate and distribute the new ancilla vertices of each vertex.

For the node to obtain the desired ancilla vertex from the nodes that already possess the required vertex, we will have to track which nodes are currently holding the connections from which vertex. This will be done by the so-called Vertex Reaching Map (VRM), which is basically used to track the set of reachable nodes $N^{\mathscr{S}_N}_{v_s}$ for all vertices in Def.~\ref{def:reachableNodes}, or equivalently a map $\mathrm{VRM}:v_s\mapsto N^{\mathscr{S}_N}_{v_s} $. To implement an edge $e=v_{s_1}v_{s_2}$ in the graph state, we will need to distribute the connections from $v_{s_1}$ from the nodes that currently hold the ancilla vertex of $v_{s_1}$ to the nodes that currently hold the one of $v_{s_2}$ or reversely, since the edges are not directed.  This can be completed with a slight modification to the Dijkstra algorithm as follows:
\begin{enumerate}
    \item The inputs to the algorithm are now $\{(n_{v_{s_1},i},c_{s_1,i})\}$ and $\{(n_{v_{s_2},j},c_{s_2,j})\}$. Here, $c_{s_1,i}$ and $c_{s_2,j}$ are the cost of distributing the connections of $v_{s_1}$ and $v_{s_2}$ to the node $n_{v_{s_1},i}$ and $n_{v_{s_2},j}$, respectively.
    \item Initialized the reach table for all $n_{v_{s_1}}$ as reached at cost $c_{s_1,i}$.
    \item Run Dijkstra, when the cost to a node in $\{n_{v_{s_2,j}}\}$ is calculated, the additional cost $c_{s_2,j}$ will be added.
    \item Once the algorithm visits the node with the minimum cost in $\{n_{v_{s_2,j}}\}$, the node and the corresponding path are returned.
\end{enumerate}
We omitted the proof of the correctness for this simple modification version of Dijkstra, while we remark that the correctness of this algorithm relies on the assumption that in the current network,  no path from other $n_{v_{s_1},i}/n_{v_{s_2},j}$ to other $n_{v_{s_1},i'}/n_{v_{s_2},j'}$  with smaller cost exits is available. The modified version of the Dijkstra algorithm has a time complexity of $O(|E_N||V_N|)$ where the additional $|V_N|$ factor comes from the fact that we need to calculate the metric for the whole path at each time to allow a more general metric beyond the addable one, though we can turn our metric into the latter in this work.

\begin{algorithm}[H]
    \caption{P2PGSD}
    \label{alg:P2PGSD}
    \begin{algorithmic}
       \Function{P2PGSD}{$\mathscr{P}$} 
            \State{$k\gets 0$} \Comment{Initialize the shot count}
            \State{$\mathrm{PathSet}\gets \{\}$} \Comment{Initialize PathSet as a dictionary}
            \While{$E_S\neq \emptyset$}
                \State{$\mathscr{N}_{\mathrm{backup}}\gets\mathscr{N}$}
                \State{$k\gets k+1$}
                \State{PathSet$[k]\gets$ []} 
                \State{DegreeList$\gets$ [$v_s$ in descending order of degree]}
                \While{DegreeList is not empty}
                    \State{$v_{sc}\gets$ pop the first element in DegreeList}
                    \State{Neighbors$\gets[v_{sc} \mathrm{'s\, neighbors}]$}
                    \State{sort Neighbors in descending order of degree}
                    \For{$u$ in Neighbors}
                        \State{Path$\gets$Modified\_Dijkstra(VRM[$v_{sc}]$,VRM[$u$]))}
                        \State{append Path to PathSet$[k]$}
                        \State{update $\mathscr{N}_{\mathrm{backup}}$ to remove resources used by Path}
                        \State{update VRM}
                        \State{remove $e=v_{sc}u$ from $E_S$}
                        \State{Sort DegreeList in descending order of degree}
                        
                    \EndFor
                
                \EndWhile
            \EndWhile
            \State \Return PathSet
        \EndFunction
    \end{algorithmic}
\end{algorithm}

We notice that when a path is established between $n_{v_{s_1}}\in N_{v_{s_1}}$  and $n_{v_{s_2}}\in N_{v_{s_2}}$ to implement the edge $e=v_{s_1}v_{s_2}$, the channels (Bell pairs) along this path can be assigned to $v_{s_1}$ or $v_{s_2}$ due to the flexibility of implementing the edge by a local CZ gate on any nodes along this path. We do not need to make the decision immediately when we first find this path; instead, the nodes along this path are added to $\mathrm{VRM}(v_{s_1})$ and $\mathrm{VRM}(v_{s_2})$ with an "unfixed" flag. The decision will be made later when some of the nodes are needed to redistribute the connections from $v_{s_1}$ or $v_{s_2}$, the process of which is shown in Fig.~\ref{fig:P2PGSD}(a). Now, the idea of the P2PGSD algorithm is simply finding the paths for all the edges in the graph state greedily shot by shot while updating the VRM at the same time, the pseudo-code of which is shown in Alg.~\ref{alg:P2PGSD}. The algorithm will not give an explicit memory strategy itself and the actual memory strategy embedded in the adapted protocol has three options as follows:
\begin{enumerate}
    \item \textbf{Minimum}: Only the node in $\{\alpha_D(v_s)\}$ will keep the vertex in place until the next shot. In this case, $\#\mathscr{M}_{\mathrm{P2PGSD\_Min}}=N|V_S|$ if $N$ shot is used.
    \item \textbf{Standard}: All nodes that will be used to redistribute the ancillary vertices in the following shot will be kept to the next shot.
    \item \textbf{Maximum}: All the nodes' ancillary vertices with 'fixed' flag will be kept to the next shot.
\end{enumerate}

The illustration in the space-time network of the solutions given by P2PGSD under different memory strategies is shown in Fig.~\ref{fig:P2PGSD}(b-d). Under the current implementation of P2PGSD, the runtime is upper-bounded by 
\begin{equation}
    O(|V_S||E_S||V_N|\mathrm{max}(|E_N|,|V_S|)),
\end{equation}
which is significantly faster than MGST under all circumstances.

\subsection{Spacetime Peer-to-Peer Graph State Distribution Algorithm}
\begin{wrapfigure}{r}{0.5\textwidth}
  \begin{center}
      \includegraphics[width=.5\textwidth]{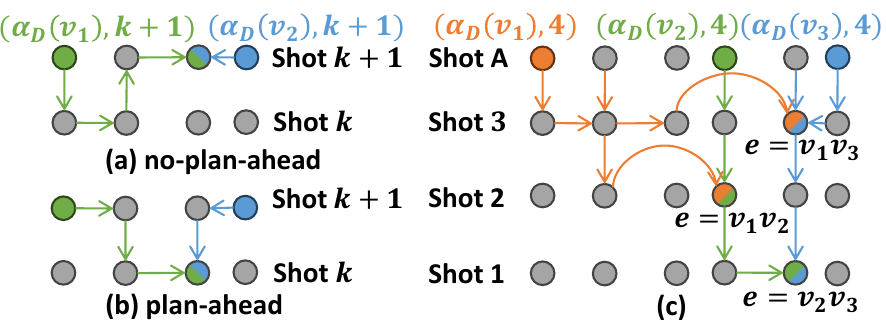}
  \end{center}
  \vspace{-2mm}
  \caption{\label{fig:STP2PGSD}Illustration for STP2PGSD algorithm. (a) A \textit{no-plan-ahead} solution compared to (b) a solution without such limitation, where the parallel arcs in the shot $k$ of the lower plane indicate the violation of such limitation. (b) STP2PGSD directly finds a path in the spacetime network and overcomes the limitation of \textit{no-plan-ahead}. The addition vertical arcs represent a recovery mechanism similar to MGST. }
  \Description{STP2PGSD Illustration}
  \vspace{-4mm}
\end{wrapfigure}
It should be noted that P2PGSD can only converge to a limited class of solutions. One observed from Fig.~\ref{alg:P2PGSD} (c)-(d) that except for the assigned nodes, all other nodes obtained the connections from the specific vertices via routing from the current or the pass shots, which corresponds to the upgoing vertical arcs. This means that these connections need to be preserved using memory for future shots. In addition, if one compares it with the solution given by MGST as shown in Fig.~\ref{fig:MGST}(b), it can be found that all vertical arcs are pointing down, which means the connections are all established in the current or future shots. Physically, the latter means that the vertex of the graph state is created somewhere else and distributed to the assigned node, whereas the former means that the vertex is created at the beginning at the assigned node while its connections are distributed to the reached nodes. In other words, in the P2PGSD solution, the reached nodes are connected to the assigned nodes by routing the previous or current shots. We will call this kind of solution a \textit{no-plan-ahead} solution. A more detailed comparison and illustration of \textit{no-plan-ahead} solution and a solution that is \textit{plan-ahead} is shown in Fig.~\ref{fig:STP2PGSD} (a). We show in Appendix.~\ref{sec:proofs} that this kind of solution can potentially consume more shots than the optimal solution, which calls for an improvement of the previous P2PGSD algorithm.
\begin{wrapfigure}{r}{0.5\textwidth}
\vspace{-6mm}
\begin{minipage}{0.5\textwidth}
\begin{algorithm}[H]
    \caption{STP2PGSD}
    \label{alg:STP2PGSD}
    \begin{algorithmic}
       \Function{STP2PGSD}{$\mathscr{P}$} 
            \Repeat\,Binary Search on $k$
                \State{ST$G_N$$\gets$ Build\_ST\_Network($G_N$,$k$)}
                \State{$\mathscr{P}'\gets$ $\mathscr{P}$ with ST$G_N$}
                \State{TrialPathSet$\gets$P2PGSD($\mathscr{P}')$}
                
            \Until{Exact $k$ is found} \Comment{P2PGSD finishes with one shot}
            \State{PathSet$\gets$TrialPathSet}
            \State \Return PathSet
        \EndFunction
    \end{algorithmic}
\end{algorithm}
\end{minipage}
\vspace{-2mm}
\end{wrapfigure}

The improvement is fairly simple if the spacetime version of the network is utilized. According to Def.~\ref{def:reachableNodes}, the reachable nodes are connected to the assigned node in the spacetime network anyway, either through the links representing memory usage or the links representing the Bell pair usage. In this sense, if we directly run the P2PGSD algorithm on the spacetime network with a determined shot $k$ and replace the assigned node set $\{\alpha_D(v_s)\}$ with $\{(\alpha_D(v_s,k))\}$, this will automatically avoid the limitation of producing \textit{no-plan-ahead} solution of P2PGSD as shown in Fig.~\ref{fig:STP2PGSD}(b), which will also give us an explicit memory strategy. In addition, it is now possible for the pathfinding algorithm to take care of the cost in the form of any linear combination of the \textit{Cumulative Bell Pair Usage} and the \textit{Cumulative Memory Usage} by assigning a proper prefactor to the cost of the memory links and the Bell pair link. If we are more concerned about the shot and memory usage, a larger cost prefactor can be assigned to the memory links. However, such a memory cost will result in preferring the Bell pair links at the higher shots, which does lower the accumulated memory cost when the overall successful probabilities for the Bell pair links are relatively high, but significantly increases the shot usage when the probabilities decrease since the Bell pair links in early shot are not fully explored. This can be resolved by modifying the original cost metric in the following two similar but different ways:
\begin{enumerate}
    \item Assigning a cost factor $m_f$ to change the effective probability of establishing a new Bell pair with the link at $j$-shot as
    \begin{equation}
    \label{eq:metricSTP2PFactor}
        P_c(e|o)\rightarrow (P_c(e|o))^{jm_f},
    \end{equation}
    which results in a factor $jm_f$ when taking the logarithm in the cost.
    \item Observe that if the memory cost is zero, then the paths planned for the previous shot will not occupy the bell link in the current shot only if the bell links have all been generated successfully, adding a factor to the effective probability as
    \begin{equation}
    \label{eq:metricSTP2P}
        P_c(e|o)\rightarrow P_c^{jw}P_c(e|o),
    \end{equation}
    where $j$ is the current shot and $w$ is the width of the channel.
\end{enumerate}
The above two-mentioned methods result in different variants of the ST-P2PGSD algorithm. We note that the cost factor $m_f$ and the memory cost are hyperparameters and can be further optimized to achieve better performance.

The pseudocode for the STP2PGSD algorithm is shown in Alg.~\ref{alg:STP2PGSD} and the runtime of STP2PGSD is upper bound by $O(|V_S|^3|E_S||V_N||E_N|\log|V_S|)$.

\subsection{Adaptive Protocol}

\label{subsec:AdaProtocol}
Due to the fact that the quantum network is probabilistic, it is difficult to establish all connections as planned by the graph state distribution algorithm. It is rather cumbersome and resource-wasting to stick to the distribution plan given by the initial calculation result by the graph state distribution algorithm and wait until all connections for the plan of current to succeed before proceeding to the initial plan for the next shot. Thus, our distribution protocol will run adaptively as shown in Fig.~\ref{fig:adaProtocol}.

\begin{wrapfigure}{r}{.5\textwidth}
\vspace{-2mm}
  \begin{center}
      \includegraphics[width=.5\textwidth]{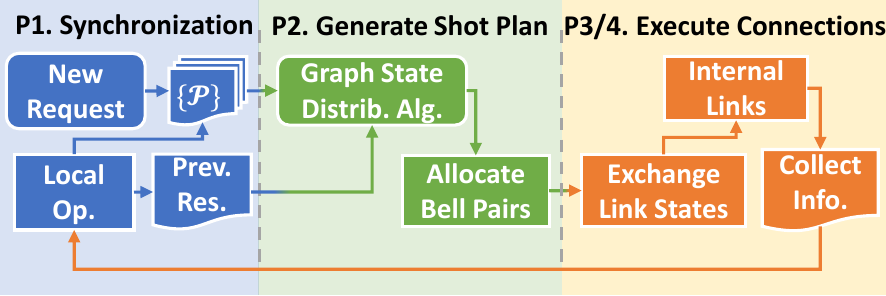}
  \end{center}
  \vspace{-2mm}
  \caption{\label{fig:adaProtocol}Illustration for adaptive graph state distribution protocol. The algorithm is executed in an adaptive manner in order to generate and perform the shot plan that utilizes the network resources efficiently. The arrows indicate the execution order and information transfer relation, and the shaded area indicates which phases the procedure is executed in.}
  \Description{Adaptive Protocol Illustration}
  \vspace{-2mm}
\end{wrapfigure}
At each shot, only the plan for the first shot given by the algorithm will be collected and executed by the network. After that, the execution results for the current shot will be collected in the synchronization phase and the distribution task $\mathscr{P}$ will be modified accordingly. Then the modified distribution task together with the available resources from the last shot will be submitted to the algorithm in the next cycle to perform an adaptive distribution strategy.  This mechanism offers strong flexibility for the graph state distribution in a probabilistic network and also inherits a sense of recovery mechanism using the link resources from the previous shot, which will be beneficial to scaling issues when the probabilities of success for the links in the network are low. However, such an adaptive protocol will require some modifications to the original algorithm to take care of the resources from the previous shot, which is easy and trivial for the first two algorithms but requires careful design for the ST-P2PGSD algorithm. Further details of the adaptive protocol will be left for discussion in the Appendix.~\ref{sec:detailAda}.

\section{Evaluation}\label{sec:evalution}
\subsection{Simulation Setup}\label{subsec:simulationSetup}
The proposed network models as well as the algorithms are implemented on a custom-built simulator in Python with topology generation and common subroutines adapted from the NetworkX package~\cite{SciPyProceedings_11}. As we have mentioned in Sec.~\ref{subsec:networkModel}, the protocols presented can be considered a network protocol that regards general graph state as a basic connection instead of Bell pairs. Thus, our simulations are not built upon other proposed protocols, but rather be implemented from the relatively basic level. The source codes as well as the data for the numerical evaluation can be found in the corresponding open repository~\cite{Huang_P2PGraphStatesDistribution}.

The network topology is generated mainly randomly by the Waxman model~\cite{waxman1988routing} within a unit square area, where the network nodes are placed uniformly unless explicitly mentioned, given the parameters indicating the total number of network nodes and the average degree. Then the success rate of a single usage of a certain channel is also mainly determined by the channel distance with an exponential decay governed by a normalized attenuation factor. Finally, the width for each channel and available long-term memory for each node are sampled from a modified Poisson distribution given the average channel width and average long-term memory width, which ensures each channel of the network has at least one width and the nodes have at least the long-term memory to make the assigned graph state distribution task possible to success.

Multiple types of graph states are being considered including the Erd\H{o}s-R\'enyi graph~\cite{erdds1959random}, star graphs, Pr\"ufer tree graphs~\cite{wang2009optimal}, grid graphs, etc., with some of them being generated randomly when possible. After the graph state topology is generated, the vertices are randomly assigned to the network nodes to implement the assignment map $\alpha_{D}$.
\subsection{Metrology}\label{subsec:metrology}
The simulation is designed to evaluate the time shots, $\#\mathscr{S}$, and cumulative memory consumption, $\#\mathscr{M}$, defined in Sec.~\ref{sec:principle} as well as the computation runtime for distributing various types of graph states (random or fixed) on a random network under the algorithms proposed in this work as well as the one modified from previous work (MGST). We do not focus on Bell pair usage, $\#\mathscr{B}$, since it is not our main optimization goal and it can be shown that it is not compatible with the optimization of the previous two resources and requires a completely different design strategy for the algorithms.

The reference setting configuration for the network topology is a 50-node network lying in a unit area with channels created between two nodes governed by the probability of $p=0.6\exp(-5d)$, where $d$ is the normalized distance relative to the maximum distance between any pair of nodes. And the average channel width is chosen to be around $2$ with an average success rate of around $0.87$ corresponding to a normalized attenuation factor of $0.5$. The long-time memory width is unlimited by default and $P_s$ is set to 1. For the recovery mechanism of the paths, the EUM algorithm is set to be on by default with a maximum hop of 2. We selected EUM rather than the more powerful ST-EUM, since the latter consumes long-term memory of the residue, which makes it difficult to count the actual $\#\mathscr{M}$ used by different protocols and hinders performance evaluation. 

We vary the number of vertices of various types of graphs as well as the nodes from 9 to 200 and evaluate the performance under different average success rates from 0.4 to 1. We also test the algorithms under extremely limited available long-term memory, which is merely enough to distribute the graph state. Each data point is averaged with 1000 independent samples. The performance of the algorithms is compared under various situations mentioned above using the metrics and being understood from our theoretical analysis.

\subsection{Results Analysis}\label{subsec:resAnalysis}
\textbf{The general performance} of protocols based on 3 main algorithms together with two additional variants for P2PGSD and ST-P2PGSD are shown in Fig.~\ref{fig:PerformanceEval1}, where we conclude that the developed P2P series algorithms demonstrate a huge advantage over the center-based MGST algorithm in distributing the widely used tree graphs and grid graphs in the sense of shot usages and cumulative memory usage. The resource consumption difference grows when the number of vertices of the graph state increases, which supports our statement that many resources can be wasted if the graph state topology is not considered for distribution. Furthermore, P2PGSD with a maximum memory strategy consumes more memory than P2PGSD while producing almost the lowest shot consumption, which illustrates the idea of trading memory for faster distribution. It is observed that the memory usage of ST-P2PGSD falls between that of P2PGSD and P2PGSD-Max, and similarly, its shot consumption is intermediate between the two as well. We also investigate the performance of the algorithms under extremely limited memory as shown in Fig.~\ref{fig:PerformanceEval1}(c) where each node only holds the amount of long-term memory that is equal to an average memory of 1 or its assigned vertices to fulfill the minimum requirement to distribute such a graph state (while one of the nodes in the network is chosen to possess $2|V_S|$ memory in order for MGST to be executed). In this case, MGST consumes significantly more resources than the P2PGSD algorithms series and also has huge variance due to the available center node choice. In addition, the performance of various P2PGSD algorithms is drawn to a common line since there is not much room for memory optimization.
\begin{figure*}[t]
  \includegraphics[width=\textwidth]{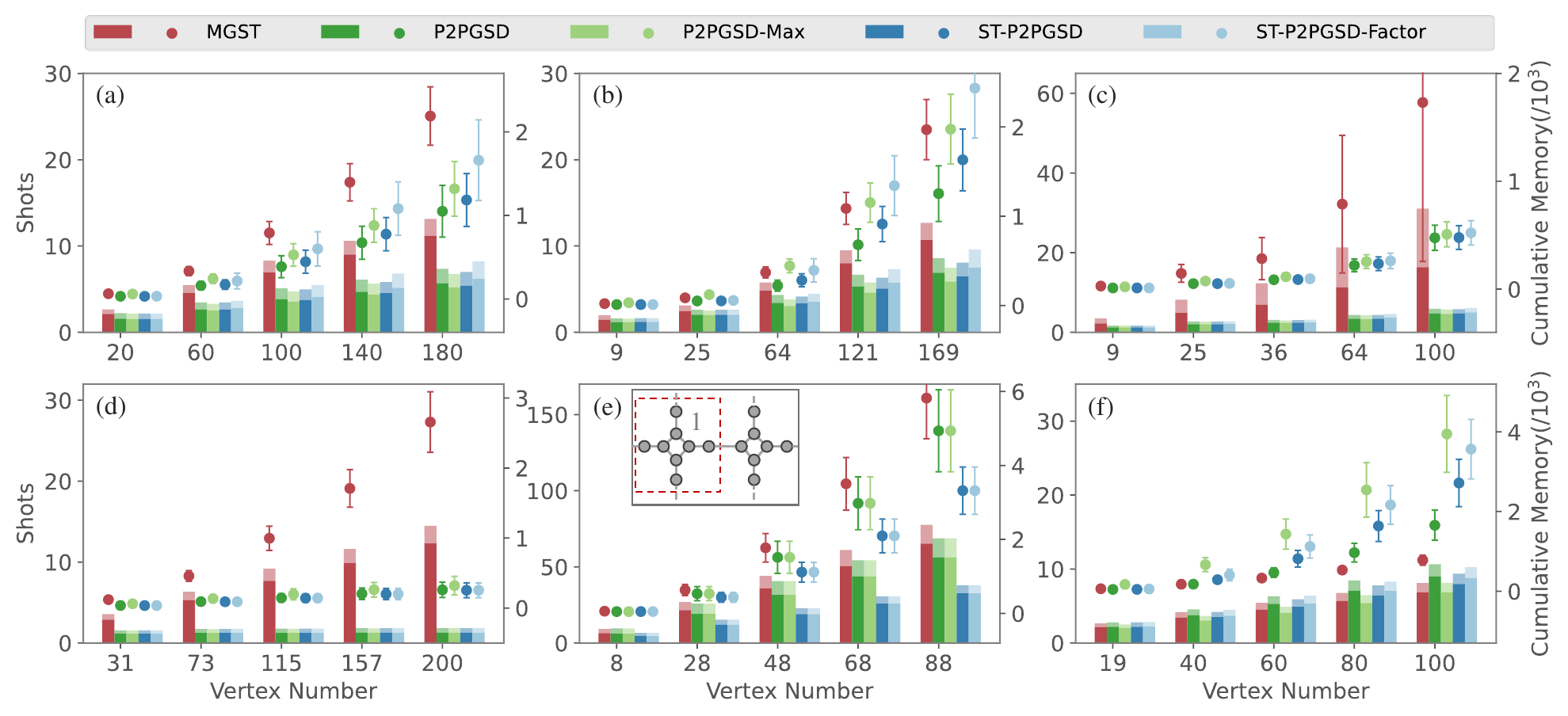}
  \caption{\label{fig:PerformanceEval1}Shots (bar) and cumulative memory (scatter) usage of protocols based on various algorithms versus the number of vertices in certain types of graph state. P2PGSD adapts the standard memory strategy while P2PGSD-Max adapts the maximum memory strategy. The shaped area and the error bar represent the standard deviation of the data. The metric of standard ST-P2PGSD is given by Eq.~(\ref{eq:metricSTP2P}) while the "Factor" version is given by Eq.~(\ref{eq:metricSTP2PFactor}). (a) Random Pr\"ufer tree graphs. (b) Grid graphs. (c) Grid graphs while the performance is evaluated under extremely limited memory. (d) Star graphs. (e) Bell pairs (regular 1) graphs with cell topology network (illustrated in subfigure with width 1 and average channel probability $0.9$). (f) Erd\H{o}s-R\'enyi with average edge probability $0.1$.}
  \vspace{-2mm}
\end{figure*}

\textbf{Advantageous tasks.} Three different main algorithms establish significant advantages with respect to distribution tasks with certain topology characteristics. Fig.~\ref{fig:PerformanceEval1} (d)-(f) shows the typical performance comparison between the algorithms on the presentative tasks. As our initial inspiration for this work, the (ST-)P2PGSD series algorithms demonstrate extremely low resource consumption when the state being distributed is a star graph (GHZ state), as shown in Fig.~\ref{fig:PerformanceEval1} (d). One observes that shot usage as well as cumulative memory consumption hardly grows under the P2PGSD series algorithms, in contrast to MGST. This is a straightforward consequence since distributing such a state only requires the assigned node to be connected in tree form by the successful Bell pair and the resource consumption saturates if the network size is unchanged due to the deterministic local operation. The spacetime version algorithms do not show any advantages in the previous case since these tasks admit a one-shot solution and do not need memory optimization, but one observes from Fig.~\ref{fig:PerformanceEval1} (b) that they consume remarkably fewer resources in terms of both shot usages and memory usages when the network suffers from serious discontinuous bottleneck channels as the case of the so-called "cell topology network" shown in Fig.~\ref{fig:PerformanceEval1} (g). The ST-network method makes it rather easy and convenient to deal with this problem and generate an efficient memory management plan. Finally, the centralized MGST algorithm does show an advantage when the graph state is relatively dense (with a high average degree), as shown in Fig.~\ref{fig:PerformanceEval1} (c), which is understandable due to the fact that it treats vertices instead of edges as 'files' and shall be more efficient when the average number of edges is high and grows quadratically with the number of vertices. However, we remark here that there can be the case that P2PGSD series algorithms offer better solutions even when the graph state is a complete graph, and we offer an example in the Appendix.~\ref{sec:proofs}. 

\begin{wrapfigure}{r}{.5\textwidth}
  \vspace{-2mm}
  \begin{center}
  \includegraphics[width=\linewidth]{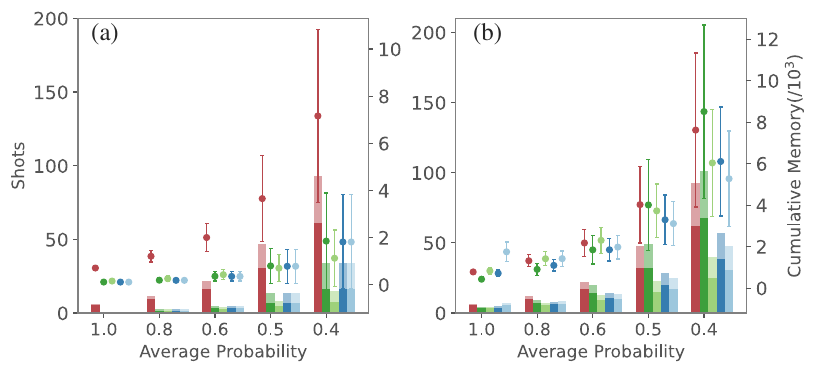}
  \end{center}
  \vspace{-2mm}
  \caption{\label{fig:PerformanceEval2} The performance of protocols based on various algorithms versus the average channel probability is evaluated. The legend is the same as Fig.~\ref{fig:PerformanceEval1} and omitted here. The samples (less than $5\%$ for each data point) that consume more than $200$ shots are discarded to avoid extremeness. (a) The graph state being distributed is a star graph with 115 nodes with random assignment. (b) The graph state being distributed is a $11\times11$-node grid graph. }
  \vspace{-2mm}
\end{wrapfigure}
\textbf{Varying channel probabilities.} The robustness under the probabilistic network of algorithms is investigated under various average channel probabilities assuming the distribution of the star state and the grid state, the results of which are shown in Fig.~\ref{fig:PerformanceEval2}. As we expected, P2PGSD-Max has the best performance in terms of shots when the probabilities decrease, since we are trading memory for speed. In fact, P2PGSD-Max also tends to consume less memory than other algorithms when the probability decreases. 
It is a well-known fact that saving all the unused Bell pairs in a path can result in a better scaling of the expected time for success~\cite{RevModPhys.95.045006}, and this will lead to the threshold behavior in resource consumption, which helps explain the result being observed. Since this is a path-level optimization and can be considered as part of path recovery, we did not implement it in our main distribution algorithms; instead, we offer an analytic solution to the threshold probability for some simple cases in the Appendix~\ref{sec:rec}. In particular, for the grid state distribution case shown in Fig.~\ref{fig:PerformanceEval2} (b), both ST version algorithms perform increasingly better than P2PGSD when probability decreases. It is also worth noting that although the factor version of ST-P2PGSD does not perform well when the probability is high, which may be due to the finite precision effect in calculating the metric, it gains significant advantages in both shot consumption and cumulative memory consumption when the probability decreases. A similar threshold behavior is also observed when comparing its performance with ST-P2PGSD shown in Fig.~\ref{fig:PerformanceEval2}.

\textbf{Scaling and Runtime.} The runtime of producing a solution $\mathscr{S}_N$ for a given instance $\mathscr{P}$ by each algorithm is evaluated on a 3GHz processor and each data point is averaged among 1000 instances, the results of which are shown in Fig.~\ref{fig:PerformanceEval3}.  The estimation of the upper bound of runtime for each algorithm has been given in Sec.~\ref{sec:algorithms}, and it can be concluded that the actual average runtime shows a much better scaling since the actual required shot may not be increased linearly with the vertices number. As we can see from Fig.~\ref{fig:PerformanceEval3} (a), P2PGSD runs the fastest and shows a linear dependence on the number of vertices while the ST version shows a quadratic dependence with MGST in the middle of them. It is worth noticing that there seems to be a turning point in the scaling located roughly around $|V_S|=|V_N|$ for the two ST-algorithms, which might be due to the increasing shots in $\mathscr{S}_n$, where our optimization of the binary search will result in finding the 
\begin{wrapfigure}{r}{.5\textwidth}
  \centering
  \includegraphics[width=\linewidth]{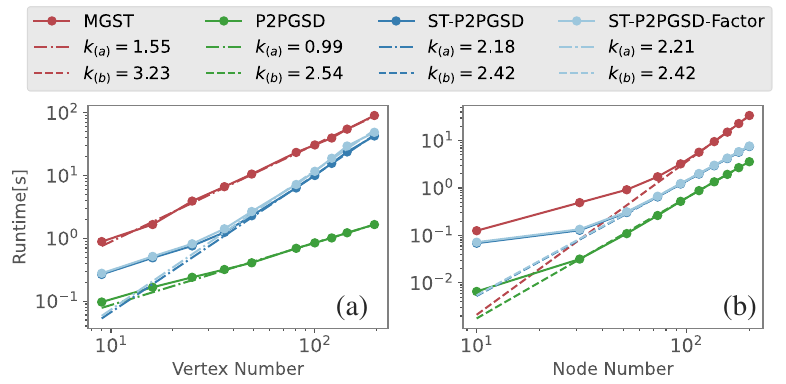}  \caption{\label{fig:PerformanceEval3} Runtime of various algorithms versus scaling of the tasks. The runtime of P2PGSD-Max is not evaluated since it should be basically the same as P2PGSD and the maximum memory strategy is implemented in the adaptive protocol rather than the algorithm itself. The dotted lines are the linear fit of the second half of the data to avoid the influence of the threshold with $k_{(a)}$ and $k_{(b)}$ representing the slope. (a) The performance under various vertex numbers. The graph state being distributed is a grid graph. (b) The performance under various node numbers. The graph state being distributed is a $3\times3$ grid graph. }
  \Description{The runtime performance of algorithms}
  \vspace{-2mm}
\end{wrapfigure}
answer earlier. All P2P-based algorithms show similar scaling with an exponential factor of 2.5 while MGST shows a cubic scaling when increasing the number of nodes in the network, this is somehow expected for the P2PGSD series algorithms due to the nonlinear increasing of the edges. However, MGST performs far better than its upper bound. This may be due to the fact that the maxflow algorithm may have an average run-time scaling close to $\omega(|E_N|)$ instead of its worse upper bound $O(|V_N|^2|E_N|)$ due to the constant maximum flow in our graph state-distributing task.

These results also demonstrate the scalability of our algorithms, with P2PGSD being the fastest among all: for one thing, we show that the dependence is polynomial; for another, the average runtime of getting a solution for all instances involved in the evaluation takes from 0.1 seconds to no more than 2 minutes. Since these algorithms are now implemented in Python without advanced optimization, one also expects an improvement of one to two orders of magnitude when implemented in more efficient programming languages.

\section{DISTRIBUTING ANY QUANTUM COMPUTATION}\label{sec:applitions}
The ability to efficiently distribute any graph state can be essential for distributed quantum computation, since one-way quantum computation, which utilizes the grid state as fundamental and universal resources, is particularly suitable for performing computation among remote qubits\cite{raussendorf2001one}. A naive idea to do so begins with distributing the large grid state among the nodes, with each holding the corresponding qubits of the distributed graph that are required to be measured at that local node. However, in this section, we will show a more efficient way to distribute any computation by teleporting a whole layer of CZ gates.

Before introducing the general distribution strategy, we begin with a motivated example, which is also the initial inspiration for using a P2P scheme in a quantum network. Suppose a server wants to share a particular set of secret quantum states denoted as
\begin{equation}
    |\psi_s\rangle=\alpha|0\rangle+\beta|1\rangle,\,\alpha\beta^*\in i\mathbb{R}
\end{equation}
among $N$ clients (let's say including the server itself for simplicity), where all $N$ clients are required to cooperate to restore this secret but no information about this state can be obtained by any subsets of them, which is also called the $[N,N]$ threshold scheme\cite{cleve1999share}. This can be done by encoding the state into $N$ qubits with $\alpha$ and $\beta$ being the amplitude of the odd parity and even part of the state and then sending each qubit to one client. For example, for a $[3,3]$ scheme, the encoded $3$ qubits state shall be
\begin{equation}
\begin{aligned}
    |\psi_e\rangle&=\frac{\alpha}{2}(|000\rangle+|110\rangle+|011\rangle+|101\rangle)\\&+
    \frac{\beta}{2}(|001\rangle+|010\rangle+|100\rangle+|111\rangle)
\end{aligned}
\end{equation}
This is exactly the center scheme like MGST, while a cleverer way to do it is to use the P2P quantum secret splitting scheme, which can potentially avoid the bottleneck issue in the network: The node that has already received part of the secret and apply the $[2,2]$ scheme as shown in Fig.~\ref{fig:DQC}(a) to split the secret into two and with the new part delivering to the client which has not received part of the secret yet. One can easily show that cascading $[2,2]$ schemes in arbitrary order gives us exactly the same final encoded state as that of the center scheme. We will show later that this is equivalent to using a GHZ state (or a star graph) distributed among the clients as resources to perform such a remote secret-sharing scheme and it is a special and inspiring case of our general distributed quantum computation scheme.

The key requirement of distributing any quantum computation is the ability to teleport any multi-qubit gates since any single-qubit operations can always be done locally and require no entanglement. By selecting a proper universal gate set, say $\{CZ,\,H,\,X,\,Y,\,Z,\,R_Z(\theta)\}$, it is further reduced to the ability to teleport the CZ gate. One can certainly teleport each CZ using a single Bell pair, but one can do better and save more resources by teleporting the whole layer of CZ gates together. To begin with, recall that a graph state $|G_S\rangle$ with underlying graph $G_S=(V_S,E_S)$ can be defined as
\begin{equation}
    |G_S\rangle=\left(\prod_{(i,j)\in E_S}{CZ_{ij}}|+\rangle^{\otimes V_S}\right)\equiv CZ_{E_S}|+\rangle^{\otimes V_S}.
\end{equation}
By the last equality, one can consider that such a graph state is created by applying the entire layer of CZ gates, $CZ_{E_S}$, to the trivial separable tensor state $|+\rangle^{\otimes V_S}$. Now, we want to show that this layer of CZ gates can be transferred to acting on the qubits that are involved in the distributed quantum computation as shown in Fig.~\ref{fig:DQC}(b) using only LOCC operation once the graph state is distributed accordingly. Let $\alpha_D(i)\in V_N$ denote the network node that shall currently hold the qubit $i$ of the graph state and is currently holding the computation qubit labeled as $i''$, where we abuse the notation and use the same letter to represent the one-to-one corresponding of the computation qubit and the graph state qubit. The desired teleportation can be accomplished using the following procedure.
\begin{enumerate}
    \item Distribute the graph state according to $\alpha_D$.
    \item For each $i$, create an auxiliary qubit in the state $|+\rangle$ labeled $i'$ and apply $CZ_{ii'},\,CZ_{i'i''}$ locally.
    \item Measure both the ancilla qubits and the graph state qubits in $X$ basis to get results $m_i'$ and $m_i\in\{+1,-1\}$ .
    \item Apply the local operation $Z_{i''}^{1-m_i\prod_{j\in N_{i}}m_j'}$ with $N_i$ here denotes the neighbors of qubit $i$ in the graph state.
\end{enumerate}
Now, we have the following theorem which will be easily proved by the stabilizer formalism in Appendix.~\ref{sec:proofs}. 
\begin{theorem}[Distributing A Layer of CZ Gates]
Given a graph state $|G_S\rangle$, let $i$ label the graph state qubits and $i''$ label the computation qubits where the same letter means that the two qubits are held at the same node. It is possible to use the LOCC operation to transfer $CZ_{E_S}\equiv\prod_{(i,j)\in E_S}{CZ_{ij}}$ to $\prod_{(i,j)\in E_S}{CZ_{i''j''}}$ using the graph state of the resources $|G_S\rangle.$.
\end{theorem}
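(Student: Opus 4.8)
The plan is to verify the four-step procedure directly in the Heisenberg picture of the stabilizer formalism. I treat the computation register $\{i''\}$ as carrying an arbitrary input $\ket{\psi}$, tracked through its logical Pauli operators $X_{i''}$ and $Z_{i''}$, while the resource $\ket{G_S}$ is stabilized by $K_i = X_i\prod_{j\in N_i}Z_j$ and each fresh ancilla $\ket{+}_{i'}$ by $X_{i'}$. Since $i$, $i'$ and $i''$ all sit at the node $\alpha_D(i)$, every gate in $U\triangleq\prod_i CZ_{ii'}CZ_{i'i''}$, together with the final $X$-measurements and the single-qubit corrections, is local; the only shared resource is the pre-distributed $\ket{G_S}$, so the whole protocol is manifestly LOCC, and it remains only to check that it realizes $\prod_{(i,j)\in E_S}CZ_{i''j''}$.

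First I would propagate the generators through $U$ using the standard rules $CZ_{ab}X_aCZ_{ab}=X_aZ_b$ and $CZ_{ab}Z_aCZ_{ab}=Z_a$. Because all the gates commute and each qubit is touched by at most the two gates on its wire, this is a short computation yielding $\tilde K_i = X_iZ_{i'}\prod_{j\in N_i}Z_j$ and $\tilde S_{i'}=Z_iX_{i'}Z_{i''}$ for the resource and ancilla stabilizers, and $U Z_{i''}U^\dagger = Z_{i''}$, $U X_{i''}U^\dagger = Z_{i'}X_{i''}$ for the logical images.

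The heart of the argument is to reduce these images modulo the stabilizer group so that the only support left on the measured qubits is through $X$-operators, whose values are fixed by the outcomes. The $Z$-image is already clean, so the output $Z$-logical is $Z_{i''}$ with sign $+1$, exactly as $CZ_{E_S''}$ leaves $Z_{i''}$ invariant. For the $X$-image I would multiply $Z_{i'}X_{i''}$ by the stabilizer $\tilde K_i\prod_{j\in N_i}\tilde S_{j'}$; the paired $Z_{i'}Z_{i'}$ and $Z_jZ_j$ factors cancel, giving
$$U X_{i''}U^\dagger \;\equiv\; X_i\Big(\prod_{j\in N_i}X_{j'}\Big)\,X_{i''}\prod_{j\in N_i}Z_{j''}.$$
On the post-measurement state the $X_i$ and $X_{j'}$ become the eigenvalues $m_i$ and $m_j'$, so the input $X_{i''}$ maps to $X_{i''}\prod_{j\in N_i}Z_{j''}$ with sign $m_i\prod_{j\in N_i}m_j'$ — precisely the $CZ_{E_S''}$-conjugate of $X_{i''}$, dressed by a byproduct sign.

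Finally I would read off the correction: a Clifford is fixed by its action on the generators $X_{i''},Z_{i''}$, and after conjugating by the Pauli $\prod_i Z_{i''}^{s_i}$ (which flips the sign of the $X$-type logical and fixes the $Z$-type) with $s_i=1$ exactly when $m_i\prod_{j\in N_i}m_j'=-1$, every generator is conjugated identically to $CZ_{E_S''}$. Hence the corrected output equals $CZ_{E_S''}\ket{\psi}=\prod_{(i,j)\in E_S}CZ_{i''j''}\ket{\psi}$, which is the claim; the correction is the $Z_{i''}$ of step 4, the exponent $1-m_i\prod_{j\in N_i}m_j'$ being understood modulo $2$ under $\{+1,-1\}\mapsto\{0,1\}$. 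I expect the single real obstacle to be this sign bookkeeping — choosing the combination $\tilde K_i\prod_{j\in N_i}\tilde S_{j'}$ that simultaneously cleans the propagated $X_{i''}$ onto the computation register and collects the neighbour outcomes into the correct byproduct — whereas the CZ conjugation and the LOCC locality check are routine.
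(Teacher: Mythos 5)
Your proposal is correct and follows essentially the same route as the paper's proof: both work in the Heisenberg picture, propagate $Z_{i''}$, $X_{i''}$ and the stabilizers $K_i$, $X_{i'}$ through the local $CZ_{ii'}CZ_{i'i''}$ gates, clean the propagated $X_{i''}$ by multiplying with the transformed stabilizers $\tilde K_i\prod_{j\in N_i}\tilde S_{j'}$ so that only measured $X$-operators remain, and absorb the outcome-dependent sign $m_i\prod_{j\in N_i}m_{j'}$ into the conditional $Z_{i''}$ correction. The intermediate identities you derive match the paper's equations exactly, so no further comparison is needed.
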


We remark that the above CZ gate teleportation strategy makes it possible to gain significant advantages and has the best performance compared to the naive teleportation of a single CZ gate if we can gather as many CZ gates as possible into a single layer. To do so, we shall try to move all the single-qubit gates in our universal gate set sandwiched by two layers of CZ gates out so that the two layers can be merged together. This is easy for $X,\,Y,\,Z,\,R_Z(\theta)$, since commuting them with the CZ gate will give us another single-qubit gate. However, this is not the case for the Hadamard gate. Although one can use single-qubit gate teleportation to move this gate out, this will result in requiring additional qubits in $G_S$ and will not really give us any merits. In fact, this procedure will eventually lead us back to the stage where all the single-qubit gates are performed after a layer of CZ gates, and thus exactly the scheme for one-way computation.

\begin{wrapfigure}{r}{.5\textwidth}
\vspace{-4mm}
  \begin{center}
      \includegraphics[width=.5\textwidth]{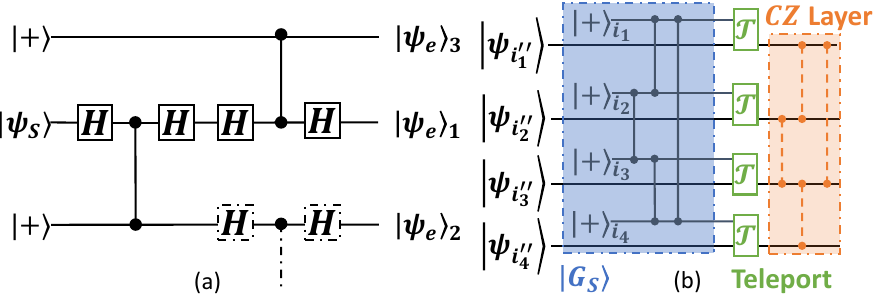}
  \end{center}
  \vspace{-2mm}
  \caption{\label{fig:DQC}Illustration of distributed quantum computation. (a) An inverse CX gate (CZ gates + Hadamard gates) is applied to the quantum secret $|\psi_S\rangle$ to further split it into $|\psi_s\rangle_1$ and $|\psi_s\rangle_2$ in the [2,2] scheme, which can be cascaded to generate more sharing parts. (b) Teleport the whole layer of CZ gates (orange shaded box) using the resources state $|G_S\rangle$ (blue shaded box) and LOCC operation (green boxes). }
  \Description{DQC Illustration}
  \vspace{-4mm}
\end{wrapfigure}
Now, go back to the special case where we want to share a quantum secret among $N$ remote clients. If one thinks of the circuits where the secret is always being split at the center node as shown in Fig.~\ref{fig:DQC}(a), one notices that all the Hadamard gates between the CZ gates are canceled with each other, and thus all CZ gates form a layer that corresponds to a star graph, which justifies our statement earlier. The distribution of a star graph with the (ST-)P2PGSD algorithm also has the exact same pattern as the cascading [2,2] sharing scheme in the sense that the required Bell pairs span a tree that covers all client nodes, which offers an alternative understanding of our previously described P2P quantum secret sharing strategy. We remark that the star graph has also been proven to be the fundamental resource for the quantum sharing of classical secrets as well as the splitting of quantum information~\cite{karlsson1999quantum,hillery1999quantum}, which can also be viewed as a special case of our distributed quantum computation strategy. However, the P2P quantum secret sharing scheme presented here is different from these protocols in the sense that no information can be retrieved for the quantum state if only a proper subset of the clients cooperates together.

\section{RELATED WORK}\label{sec:relate}
Since the invention of quantum teleportation~\cite{bennett1993teleporting},  entanglement has been considered an important resource for large-scale quantum networks. How to efficiently establish reliable pairwise entanglement has been well-studied: Caleffi proposes a novel algorithm to obtain the optimal routing for establishing entanglement between any pair in the network taking account of the probabilistic nature of a specific quantum network model~\cite{caleffi2017optimal}; Pan \textit{et al.} study the routing problem in grid network and relate it to the percolation problem in physics~\cite{pant2019routing}; A comparison between on-demand model and continuous model for quantum network is performed by Chakraborty \textit{et al.}, where suitable algorithms for each model are also proposed~\cite{chakraborty2019distributed}; The first comprehensive study for the protocol of routing multiple Bell pairs on arbitrary network topology is provided by Shi \textit{et al.}~\cite{shi2020concurrent}. Building upon the tremendous work of pairwise entanglement routing, Dahlberg \textit{et al.} design a concrete link layer protocol and study its performance by simulation on real physical parameters~\cite{dahlberg2019link}.

In addition to merely routing the entanglement, inspired by the classical network coding~\cite{li2003linear}, which allows further exploration of the network potential in multicast problems, Kobayashi \textit{et al.} prove that every classical linear network coding scheme can be directly transferred to an effective quantum network coding scheme for the distribution of multiple pairs of bells in two-way communication~\cite{kobayashi2010perfect}. Later, Beaudrap \textit{et al.} show that such linear coding is equivalent to one-way quantum computation~\cite{de2014quantum}, of which the graph states are universal and fundamental resources~\cite{hein2006entanglement}. This naturally leads to the generalization from Bell pairs to graph states as fundamental resources of a quantum network as proposed by Epping \textit{et al.}~\cite{epping2017multi}, who also find a scheme to distribute general two-colorable graph states with network coding constructed from classical linear network codes~\cite{epping2016robust}. 

The distribution of a graph state that matches the network topology was considered by Cuquet \textit{et al.} by locally generating a GHZ state at each node and gluing them together later~\cite{cuquet2012growth}. Pirker \textit{et al.} also generalize their GHZ state-based protocol for stack-like quantum networks to the distribution of any graph state~\cite{pirker2018modular}, resulting in similar protocols to the Edge-Decorated Complete Graph (EDCG) protocol~\cite{meignant2019distributing} proposed by Meignant \textit{et al.} but without concrete metrics on resource consumption. Later, Fischer \textit{et al.} proposed the graph state transfer algorithm (GST) by generating the whole graph state at one central node and then using the max flow algorithm to solve the routing path~\cite{fischer2021distributing}. However, none of these algorithms utilize the topology information of the graph state, and thus many resources can be wasted. As a direct generalization, Xie \textit{et al.} consider a scheme in which multiple center nodes are possible and offer an approximate greedy algorithm to minimize Bell pair usage~\cite{xie2021graph}. It can be shown that this scheme can be further generalized and included in our mathematical definition of the graph state distribution problem we are considering here. We notice that recently, an independent work by Fan \textit{et al.} concerning a similar mathematical model was posted on arXiv, and a linear programming (LP) based algorithm is proposed to find the optimal solution~\cite{fan2024optimized}. However, their algorithms are relatively computationally expensive and only work for specific types of graph states. In addition, the network model and metrics are completely different from ours.

Some works focus on the idea of utilizing network coding for the distribution of graph state, including the one proposed by Epping \textit{el tal.} mentioned above. This is equivalent to locally transforming the desired graph state into a LOCC-equivalent graph state that requires fewer resources to distribute. This problem is extremely complicated, and investigating the orbit of transformation is shown to be $\#$P-complete~\cite{adcock2020mapping}. Lee \textit{et al.} finds an efficient way to decompose the graph state into the 3-star graph resource state while the distribution problem is not really being considered~\cite{lee2023graph}. Koudia explores an approach to using local quantum coding for multipartite graph state transfer, although no universal and explicit protocol is provided~\cite{koudia2023quantum}. Recently Ji \textit{et al.} utilized the simulated annealing (SA) algorithm~\cite{bertsimas1993simulated} to search for the LC-equivalent graph state that requires fewer Bell pairs under a simple distribution scheme of establishing each edge of the graph state by a Bell pair~\cite{ji2024distributing}. We remark that these methods can act as prepossessing and be well combined with our distribution algorithms to give better performance.

\section{CONCLUSION AND OUTLOOK}\label{sec:conclusion}
We present a systematic study on distributing graph state over any quantum network utilizing the peer-to-peer idea from the classical network by first proving the intractability associated with various resource optimization, offering two brand new algorithms and one non-trivially modified algorithm, and then comparing their performance by integrating them into adaptive protocols. It is shown that the P2P-based algorithms show great advantages over the previously proposed centered-based one when the graph state is not dense. A general distributed quantum computation scheme that benefits greatly from our protocols is also proposed with quantum secret sharing as a representative application. We hope that our protocols and algorithms developed here could contribute to the fundamental problem of entanglement generation for future global-scale quantum networks~\cite{huang2024vacuum}. 

In addition, inspired by special relativity, we leverage the space-time network~\cite{zegura2004routing,zhang2019efficient} for quantum entanglement distribution and show how it gives to the counterpart of the Bell pair resource called cumulative memory, which was previously proposed to evaluate amortized complexity~\cite{beame2023cumulative,alwen2015high}. We also demonstrate its power by showing how it can be used to prove the complexity results and construct new algorithms. We believe that this efficient memory management method could be beneficial and act as a convenient tool for future quantum network studies.

We also offer some outlooks for future work here: 1) The P2P series algorithms work particularly well when the graph state is sparse and thus can be combined with other graph state transfer algorithms that reduce the edges of a given graph state~\cite{lee2023graph,ji2024distributing}. 2) Using homology theory to study the topology relation between the network and graph state and find the sufficient conditions for one-shot distribution will be particularly interesting. 3) Incorporating error correction into the protocols may significantly improve performance. 4) Further exploring the spacetime network methods developed here to analyze resource usage for other distributed protocols.

\begin{acks}
We thank Chen Qian, Antonio Barbalace and Allen Zang for their suggestions and acknowledge support from the ARO(W911NF-23-1-0077), ARO MURI (W911NF-21-1-0325), AFOSR MURI (FA9550-19-1-0399, FA9550-21-1-0209, FA9550-23-1-0338), DARPA (HR0011-24-9-0359, HR0011-24-9-0361), NSF (OMA-1936118, ERC-1941583, OMA-2137642, OSI-2326767, CCF-2312755), NTT Research, Packard Foundation (2020-71479), and the Marshall and Arlene Bennett Family Research Program. This material is based upon work supported by the U.S. Department of Energy, Office of Science, National Quantum Information Science Research Centers and Advanced Scientific Computing Research (ASCR) program under contract number DE-AC02-06CH11357 as part of the InterQnet quantum networking project. This work was also completed with resources provided by the University of Chicago’s Research Computing Center and part of this research was performed while the author was visiting the Institute for Mathematical and Statistical Innovation (IMSI), which is supported by the National Science Foundation (Grant No. DMS-1929348).

\end{acks}

\bibliographystyle{ACM-Reference-Format}
\bibliography{p2p}

\appendix

\section{PROOF OF THE THEOREMS AND STATEMENTS}
In this section, we will prove all the theorems and statements in the main text and we will restate the whole theorems for the sake of integrity.
\begin{theorem}[Intractability]
\label{thm:Intractability}
Given a problem $\mathscr{P}$, minimizing $\#\mathscr{S}$, $\#\mathscr{B}$ or $\#\mathscr{M}$ are all NP-Hard even if $\alpha_D$ is injective and $G_S$ is connected.
\end{theorem}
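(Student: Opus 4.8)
The plan is to prove NP-hardness separately for each of the three objectives by a polynomial-time reduction from a classical NP-hard problem, in each case producing an instance $\mathscr{P}=(\mathscr{N},G_S,\alpha_D)$ with $\alpha_D$ injective and $G_S$ connected so that the hardness survives under exactly these restrictions. The cleanest of the three is $\#\mathscr{B}$, which I would reduce from \emph{Minimum Steiner Tree}. Given a graph $H$ with terminal set $T$, take the network topology $G_N:=H$ with every channel of width $1$, unlimited memory, and enough shots that $\#\mathscr{S}$ is never binding; let $G_S$ be a star whose center maps to a fixed root terminal $t_0$ and whose leaves map bijectively to $T\setminus\{t_0\}$, which is connected and makes $\alpha_D$ injective. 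The key claim is that the minimum $\#\mathscr{B}$ equals the minimum Steiner-tree size on $T$. For the easy direction, a Steiner tree $\tau$ yields a valid solution by assigning every edge of $\tau$ to the center vertex, so that its reachable set contains all of $T$ and meets each leaf at the leaf's own node, costing exactly $|E(\tau)|$ Bell pairs. For the converse, in any valid solution the Bell pairs actually consumed form a network subgraph in which $t_0$ is connected to every $t\in T$ (the center's reach and each leaf's reach must share a meeting node, and both are grown through consumed Bell pairs), hence this subgraph contains a Steiner tree and uses at least that many edges. This establishes the correspondence and the NP-hardness of minimizing $\#\mathscr{B}$.

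For $\#\mathscr{S}$ I would exploit the fact that, within a single shot, a vertex's reachable set already grows by transitive closure along any chain of Bell pairs assigned to it, so the only obstruction to a one-shot solution is channel \emph{congestion}: with all widths set to $1$, the reach-paths belonging to distinct graph-state vertices must be edge-disjoint. I would therefore reduce from \emph{Edge-Disjoint Paths} (or a round-scheduling variant), encoding each demand pair $(s_i,t_i)$ as the two endpoints of a graph-state edge placed at distinct nodes, so that a one-shot (resp.\ $k$-shot) solution exists precisely when the demands admit an edge-disjoint (resp.\ $k$-round) routing. Consequently, deciding the optimal $\#\mathscr{S}$ decides feasibility of the source instance and is NP-hard. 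The main obstacle here is the connectivity requirement: a bare set of demand edges is a disconnected $G_S$, so I would splice in an auxiliary backbone (a hub node joined to the terminals through dedicated high-width channels, together with a few graph-state edges realizing a spanning tree of the demand gadgets). The delicate point, which I expect to be the crux of the whole theorem, is to tune the widths so that this backbone provides connectivity \emph{for free} in one shot while never offering a routing shortcut that lets one demand piggyback on another's Bell pairs; this is what preserves the exact ``one-shot solvable iff the routing instance is feasible'' equivalence under the injective and connected restrictions.

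Finally, for $\#\mathscr{M}$ I would argue that cumulative memory is monotonically tied to how long qubits must be carried across shots, so an efficient memory minimizer would implicitly expose a congestion-optimal routing. Building on the same gadget as for $\#\mathscr{S}$, I would arrange $\mathscr{N}$ so that the minimum $\#\mathscr{M}$ takes one fixed value when the demands route within the low-congestion (few-shot) regime and is strictly larger otherwise, thereby reducing the same NP-hard packing problem to memory minimization; alternatively the cumulative-memory objective can be encoded directly through the space-time-network construction used later in the paper to handle the variant in which $\mathscr{b}_N$ is fixed. In all three cases the closing step is the routine verification that the constructed $\mathscr{P}$ has size polynomial in the source instance and that injectivity of $\alpha_D$ and connectedness of $G_S$ are maintained throughout.
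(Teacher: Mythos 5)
Your overall strategy coincides with the paper's: Steiner Tree for $\#\mathscr{B}$, multi-pair edge-disjoint paths for $\#\mathscr{S}$, and a reduction of $\#\mathscr{M}$-minimization back to one-shot feasibility. The $\#\mathscr{B}$ part is complete and correct; in fact your converse direction (any valid solution's consumed Bell pairs form a subgraph connecting the root terminal to all others, hence contain a Steiner tree) is argued more explicitly than in the paper, which restricts attention to solutions that allocate Bell pairs only to the center vertex without fully justifying that restriction. Your $\#\mathscr{M}$ sketch is also the paper's idea in essence: the paper pins it down by observing that $\#\mathscr{M}\geq |V_S|$ always, with equality attained exactly when $\mathscr{P}$ is one-shot distributable, so a memory minimizer decides one-shot feasibility.

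The genuine gap is in your connectivity gadget for the $\#\mathscr{S}$ reduction, and it is not merely a matter of ``tuning widths.'' A hub joined to the demand terminals by channels of any positive width lets every demand $(s_i,t_i)$ route as $s_i \to \mathrm{hub} \to t_i$, bypassing the congestion in $G$ entirely and destroying the equivalence with edge-disjoint routing; setting the width to zero removes the backbone. The paper's fix is structurally different: it attaches $2k$ fresh \emph{pendant} ancilla nodes, one per demand endpoint, each hanging off its terminal by a single width-$1$ edge, and assigns the graph-state vertices to these ancilla nodes rather than to the terminals themselves. This simultaneously (i) makes $\alpha_D$ injective even when demands share terminals (your placement ``at distinct nodes'' silently assumes all $2k$ endpoints are distinct, which edge-disjoint-paths instances need not satisfy), and (ii) allows the connectivity chain --- one representative vertex per demand pair, joined in a path --- to be realized by new width-$1$ edges running \emph{only between ancilla nodes}, so the auxiliary structure never opens a route through the original graph. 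You would need to import this pendant-node construction (or an equivalent one) to make your reduction sound; as written, the backbone step would fail. One then still has to check that the added ancilla-to-ancilla edges cannot be repurposed as shortcuts for the demands, a verification the paper itself asserts but omits.
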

\label{sec:proofs}
\begin{proof}[Proof of Theorem~.\ref{thm:Intractability}]
It suffices to consider the case where the width of the Bell pair channels are all one and there are unlimited long-term memories in each node, which allows us to focus on the property of $G_N$ in the definition of $\mathscr{N}$. The connectivity of $G_S$ as well as the injectivity of $\alpha_D$ actually limit our input to a specific class, and one can easily persuade oneself that if we can show that the theorem stands for the limited case of injective $\alpha_D$ and connected $G_S$, then it also stands for the general case.

First, we begin with the easier one, minimizing $\#\mathscr{B}$. We do this by reducing the Steiner Tree problem to it, which has been known to be one of Karp's 21 NPC problems\cite{karp2010reducibility}.  For any given Steiner Tree instance $\langle G=(V,E),S\subseteq V\rangle$, we construct a $\mathscr{P}$ as follows: let $G_N$ be the same graph as $G$, let $G_S$ be a star graph with $|G_S|=|S|$, finally let $\alpha_D$ be any injective map such that $\alpha_D(V_S)=S$. And we seek for a one-shot solution $b_1$ such that $b_1(\vec{e},v_s)$ may not be zero only for $v_s=v_c$, i.e. all the Bell pairs resources are allocated for the root node. It is obvious that such a solution solves $\mathscr{P}$ if and only if the allocated edges span a tree that covers $\alpha_D(V_S)$, which is exactly the Steiner Tree's problem and completes the proof. We remark that this result suggests the hardness of the optimization for a commonly encountered physical problem of distributing a GHZ state and it has been obtained somewhere else in an implicit way\cite{meignant2019distributing}. 

Now, we proceed to prove the part for minimizing $\#\mathscr{S}$. This time, we prove it by reducing the multi-pair edge-disjoint paths problem to it, which has also been well studied and proved to be NP-Hard\cite{kawarabayashi2011disjoint,even1975complexity}. Given an instance of the edge-disjoint paths problem, 
\begin{equation}
    \langle G=(V,E), \{(n_j^{n_i},n_k^{m_i})\in V\times V|i=1,...,k\}\rangle,
\end{equation}
where $(n_j^{l_i},n_k^{r_i})$ indicates that we shall find a path connecting the node $n_j$ and $n_k$, we construct the $\mathscr{P}$ as follows:
\begin{enumerate}
    \item We first construct $G_N$ from $G$ by adding $2k$ ancilla nodes, with each one connected to an original node via one ancilla edge as shown in Fig.~\ref{fig:ThmS}(a).
    \item $G_S$ is a 1-regular graph with $2k$ vertices with each pair of them representing one of input pair $(n_j^{l_i},n_k^{r_i})$ as shown in Fig.~\ref{fig:ThmS}(b), which also induces a map from the $V_S$ to $V$ by mapping the vertices of the pair to the corresponding nodes denoted by $\beta_D$.
    \item For all vertices $\{v_s|\beta_D(v_s)=n_i, n_i\in V\}$, let $\alpha_D$ be a bijection from the graph state vertices to the set of ancilla nodes s.t. $\alpha_D(\{v_s^{n_i}\})\equiv\{n_i^{(j)}\}$ with $\{n_i^{(j)}\}$ being the set of ancilla nodes connected to the node $n_i$.
\end{enumerate}
It is obvious from the above construction that $\mathscr{P}$ has a one-shot solution if and only if all $k$ edge-disjoint paths are found in the original instance and this directly corresponds to the problem of simultaneously routing Bell pairs for multiple client pairs. One notices that $G_S$ is not connected in this case, while we can make a further but simple modification to $G_N$ and $G_S$ to address this issue. Choose one vertex in $G_S$ for each pair, and connect all pairs via the chosen vertices one by one as shown by the red dotted lines in Fig.~\ref{fig:ThmS}(a). For $G_N$, we perform basically the same modification for the vertices' corresponding ancilla nodes. With the new $\mathscr{P}$ with a connected $G_S$, it is easy to prove that the reduction remains valid while we omit the details of the proof here.

Finally, the minimization of $\#\mathscr{M}$ is closely related to the minimization of $\#\mathscr{S}$: $\#\mathscr{M}$ is lower bounded by $|V_S|$ and the lower bound can be reached if and only if $\mathscr{P}$ can be distributed in one shot. As a result, it is also NP-Hard.
\end{proof}
\begin{figure}[h]
  \centering
  \includegraphics[width=\linewidth]{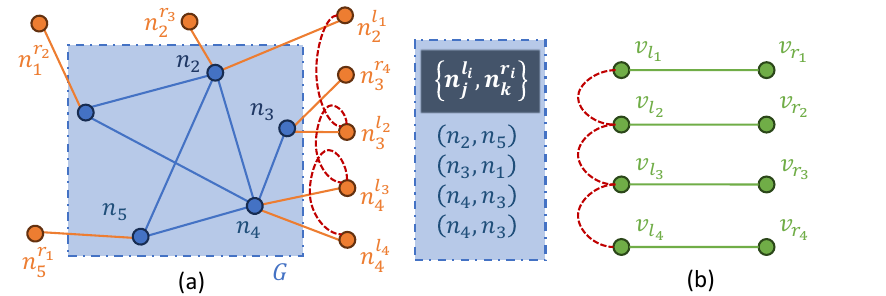}
  \caption{\label{fig:ThmS}Construction of $\mathscr{P}$ from the multi-pair edge-disjoint paths instance. (a) Constructing $G_N$ from $G$ as well as the requested pairs by adding the ancilla nodes as well as ancilla edges(orange).  (b) Construct the  $G_S$ from the given instance. The red dotted line is for further modification to make $G_S$ connected. }
  \Description{ThmS Illustration}
\end{figure}

\begin{theorem}[Upper-bound of $\#\mathscr{S}$ Given Unlimited Memory]
\label{thm:UpperBoundofNS}
Given a problem $\mathscr{P}$, where $G_N$ is connected and $W_m(n_i)=\infty\, ,\forall n_i\in V_N$, then $\#\mathscr{S}$ has a tight upper-bound of $\lfloor\frac{|V_S|}{2}\rfloor$.
\end{theorem}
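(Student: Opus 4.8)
The plan is to prove both an achievability (upper bound) and a matching lower bound (tightness). For the upper bound I would reduce the whole distribution to a single \emph{gathering} task: it suffices to make every graph-state vertex reach one common network node $r$, since once $r \in N_{v_s}^{\mathscr{s}_N}$ for all $v_s \in V_S$, the overlap condition $N_{v_{s_1}}^{\mathscr{s}_N} \cap N_{v_{s_2}}^{\mathscr{s}_N} \ni r$ holds for every edge $e = v_{s_1}v_{s_2} \in E_S$, so $\mathscr{s}_N$ is a valid solution regardless of the topology of $G_S$. Because larger channel widths only relax the flow constraints, it is enough to treat the worst case $W_c \equiv 1$; and since memory is unlimited, reachable sets are monotone across shots, so a vertex routed toward $r$ never loses ground.

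First I would fix a spanning tree $T$ of $G_N$ (which exists since $G_N$ is connected) and weight each node $n \in V_N$ by $|\{v_s : \alpha_D(v_s) = n\}|$, so the total weight is $|V_S|$. I would then pick $r$ to be a \emph{centroid} of $T$ for these weights, i.e. a node whose removal leaves every component of total weight at most $\lfloor |V_S|/2 \rfloor$ (such a node always exists, and integrality forces the floor). For each tree edge $e$, define its load $\ell(e)$ as the number of vertices assigned on the side of $e$ away from $r$; every such far side is contained in a single component of $T - r$, so $\ell(e) \le \lfloor |V_S|/2 \rfloor$ for all $e$. It then remains to prove a gathering lemma: with capacity one token per edge per shot but allowing a single vertex to traverse a whole path in one shot, all $|V_S|$ tokens can be routed to $r$ along $T$ in $\max_e \ell(e)$ shots. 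I would establish this via a time-expanded flow over the $N$ shots, in which the only cuts that can bind are the single tree edges (each separating a subtree from $r$); an edge passes $N$ units over $N$ shots, and $N \ge \max_e \ell(e) =: \ell_{\max}$ makes the min-cut condition hold, so an integral monotone schedule of length $\ell_{\max} \le \lfloor |V_S|/2 \rfloor$ exists.

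For tightness I would exhibit a concrete instance attaining the bound. Take $G_N$ to be two nodes $A,B$ joined by a single channel of width one, let $G_S$ be a perfect matching with $\lfloor |V_S|/2 \rfloor$ edges (plus one isolated vertex if $|V_S|$ is odd), and assign the two endpoints of each matching edge to $A$ and to $B$ respectively. For a matching edge $\{u,w\}$ with $\alpha_D(u)=A$ and $\alpha_D(w)=B$, the reachable sets start as $\{A\}$ and $\{B\}$, so satisfying it requires extending $u$ across the channel or $w$ across the channel, consuming one unit of the channel \emph{assigned to that specific vertex}. Since each matched vertex has degree one, a single such Bell pair can satisfy at most one edge; since the channel has width one, at most one vertex is extended per shot; and memory links only connect a node to itself at the next shot, hence never cross between $A$ and $B$. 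Therefore at most one matching edge becomes satisfiable per shot, forcing $\#\mathscr{S} \ge \lfloor |V_S|/2 \rfloor$.

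I expect the main obstacle to be the achievability half of the gathering lemma — showing that $\ell_{\max}$ shots truly \emph{suffice}, i.e. that edge-crossing times can be chosen monotonically along every root-path while keeping each edge's crossings in distinct shots. The min-cut/time-expanded argument gives feasibility cleanly, but making the schedule explicit (e.g. a greedy ``heaviest-subtree-first'' rule) is the step needing care; by contrast, the centroid weight bound and the lower-bound reduction are comparatively routine.
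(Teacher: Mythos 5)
Your proposal is correct, and it reaches the bound by a genuinely different route than the paper. Both arguments are ultimately ``hub-based'': every vertex is routed to a single common node $r$, whose membership in every reachable set trivially satisfies the overlap condition for all of $E_S$. The paper, however, obtains the good hub by local search on the full graph: it starts from the $|V_S|$-shot GST flow solution rooted anywhere, finds any channel carrying more than $\lfloor|V_S|/2\rfloor$ paths, relocates the root to that channel's downstream endpoint while reversing $f_2$ and deleting $f_1-f_2$ of the offending paths, and iterates until no channel exceeds the bound; it then converts the per-channel flow bound into a shot bound by invoking its separately proven ``simply executable edge-disjoint path set'' theorem. You instead discard the non-tree edges, invoke the classical weighted-centroid theorem on a spanning tree to get the per-edge load bound $\ell(e)\le\lfloor|V_S|/2\rfloor$ in one step, and then need a gathering lemma to schedule the tree flow in $\ell_{\max}$ shots. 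Your centroid step is cleaner and more standard than the paper's root-relocation (which is really the same local search in disguise), and your restriction to $W_c\equiv 1$ on a spanning tree is legitimate since discarding resources only worsens the achievable shot count; what you pay for this is exactly the lemma you flagged, which plays the role of the paper's simply-executable theorem. For that lemma you do not actually need the full flows-over-time machinery: at each shot let $T^+$ be the subtree of edges with positive residual load (it is ancestor-closed, so it is a subtree containing $r$, and each of its leaves must hold a waiting token), decompose $E(T^+)$ into edge-disjoint root-directed paths each starting at a distinct leaf, and move one token along each path, storing it at the path's upper endpoint; every positive load decreases by exactly one per shot, so $\ell_{\max}\le\lfloor|V_S|/2\rfloor$ shots suffice. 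Your tightness construction (a width-one bottleneck with a perfect-matching graph state whose endpoints straddle it) is the same as the paper's and is sound, since each degree-one vertex's crossing can certify only its own edge.
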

\begin{proof}[Proof of Theorem~.\ref{thm:UpperBoundofNS}] It is easy to see that MGST guarantees to solve $\mathscr{P}$ within $|V_S|$ shots.  Let $[ \mathscr{b}_{|V_S|}]$ denote its flow solution. It suffices to prove that we can modify it into another solution $[\mathscr{b}'_{|V_S|}]$  with no channel containing flow greater than $\lfloor\frac{V_S}{2}\rfloor$. Find any channel that contains flow greater than $\lfloor\frac{|V_S|}{2}\rfloor$, then use the flow-decomposition theorem to get all the directed paths. Since the paths that utilize the same channel will have the same direction as the theorem guarantees, one can easily show that removing or reversing a subset of paths passing a specific node will not increase the absolute flow values in the channels connecting to it.  There are $f_1>\lfloor\frac{|V_S|}{2}\rfloor$ paths passing this channel to more than $\lfloor \frac{|V_S|}{2}\rfloor$ targets from the roots. All other $f_2=|V_S|-f_1\leq\lfloor \frac{|V_S|}{2}\rfloor$ paths do not use such channel. Then move the root to the downstream endpoint of this channel. Now, we just need to reverse the direction of any $f_2$  paths within $f_1$ paths and remove other $f_1-f_2$ paths to get the new flow solution. Such operation will not create any new channels that carry flow greater than $\lfloor \frac{|V_S|}{2}\rfloor$ as argued above while making the selected channel now consume only $f_2$ flow. This process can be repeated until no channels contain flow greater than $\frac{|V_S|}{2}$.
And one can easily construct an example $\mathscr{P}$ that requires at least $\frac{|V_S|}{2}$ shots to distribute by assuming the graph state consists of merely Bell pairs and the network has a single bottleneck channel with width $1$, which proves that this bound is tight and optimal. We remark that this result is the first tight upper bound, to our best knowledge.

\end{proof}
\begin{theorem}[Intractability Given $\mathscr{b}_n$]
\label{thm:IntractGivenSn}
Given $\mathscr{b}_N,\,N\geq2$ which is a \textit{Valid Solution} for a problem $\mathscr{P}$, finding $\mathscr{m}_N$ s.t. $\mathscr{S}_N=(\mathscr{b}_N,\mathscr{m}_N)$ that solves $\mathscr{P}$ with minimum $\#\mathscr{M}$ is NP-Hard.
\end{theorem}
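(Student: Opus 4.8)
The plan is to prove NP-hardness by a reduction from \textsc{Set Cover}, carried out entirely inside the space-time network picture developed in the previous subsections. The key observation is that fixing $\mathscr{b}_N$ fixes all the horizontal (Bell-pair) links of the space-time graph $V_N\times[N+1]$ together with their flow orientations, so that the only remaining freedom is the choice of the vertical memory links, each usage of which contributes one unit to $\#\mathscr{M}$. With $\mathscr{b}_N$ fixed, deciding whether $\mathscr{S}_N=(\mathscr{b}_N,\mathscr{m}_N)$ is valid becomes a pure reachability question: for every edge $e=v_{s_1}v_{s_2}\in E_S$ the space-time reachable sets $N_{v_{s_1}}^{\mathscr{s}_N}$ and $N_{v_{s_2}}^{\mathscr{s}_N}$, each grown from its anchor $(\alpha_D(\cdot),N+1)$ through the fixed Bell links and the chosen memory links, must intersect. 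Minimizing $\#\mathscr{M}$ is therefore the problem of buying the cheapest collection of vertical links that realizes all these intersections, and the goal is to engineer an instance in which ``buying vertical links'' is exactly ``choosing sets.''

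Given a \textsc{Set Cover} instance with universe $\{e_1,\dots,e_n\}$ and sets $S_1,\dots,S_m$, I would take $G_S$ to be the star with center $c$ and leaves $\ell_1,\dots,\ell_n$ (so $G_S$ stays connected and $\alpha_D$ injective), and build a network with $N=2$, i.e.\ layers $1,2,3$. All memory optimization is isolated onto the single vertex $c$: each leaf $\ell_i$ is given its own destination node $t_i$ and no Bell-pair flow, so its reachable set is the single top node $(t_i,3)$, and the edge $(c,\ell_i)$ is satisfied precisely when $c$ reaches $(t_i,3)$. For $c$ I include in $\mathscr{b}_N$ only the Bell links $d_c\!\to\!s_j$ at layer $2$ and $s_j\!\to\!t_i$ at layer $1$ whenever $e_i\in S_j$, where $d_c=\alpha_D(c)$ and $s_1,\dots,s_m$ are fresh ``set nodes.'' Starting from $(d_c,3)$, the vertex $c$ must first descend $(d_c,3)\!\to\!(d_c,2)$ (one forced memory usage), at which point the layer-$2$ Bell links make every $(s_j,2)$ reachable for free; \emph{selecting} a set $S_j$ then means paying one memory usage to descend $(s_j,2)\!\to\!(s_j,1)$, after which the layer-$1$ Bell links make $(t_i,1)$ reachable for every $e_i\in S_j$; finally $c$ must climb back $(t_i,1)\!\to\!(t_i,2)\!\to\!(t_i,3)$ to meet the leaf. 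Including no other Bell links guarantees that the only route into $(t_i,1)$ is through some selected set node, so every universe element must be covered.

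A short check shows $\mathscr{b}_N$ is itself a \emph{Valid Bell Pair Solution}, the standing hypothesis of the theorem: in the Bell-pair reachability $N_{v_s}^{\mathscr{b}_N}$ the center reaches every $t_i$ via $d_c\to s_j\to t_i$, so $N_c^{\mathscr{b}_N}\cap N_{\ell_i}^{\mathscr{b}_N}\ni t_i$, and the all-sets memory strategy witnesses feasibility. Because the top layer carries no Bell links, the two climbing links per $t_i$ and the single initial descent are forced and contribute a constant $2n+1$ to $\#\mathscr{M}$, whereas the only variable cost is one memory usage per selected set; hence the minimum cumulative memory equals $2n+1+\mathrm{OPT}$, with $\mathrm{OPT}$ the minimum number of sets covering the universe, and the gadget uses exactly the two memory levels available only once $N\geq2$, matching the hypothesis. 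A polynomial algorithm for the memory-minimization problem would thus solve \textsc{Set Cover}. The main obstacle I anticipate is the ``no shortcuts'' verification: using the prescribed Bell-link placements together with their fixed flow orientations, one must argue that reachability from $(d_c,3)$ cannot reach any $(t_i,1)$ except by a genuine memory descent at a covering set node, so that $\#\mathscr{M}$ is faithfully coupled to the set-cover optimum rather than to some unintended cheaper routing through the space-time graph.
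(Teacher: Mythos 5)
Your reduction is correct, but it takes a genuinely different route from the paper. The paper reduces from \textsc{Steiner Tree}: it reuses the same star-graph gadget from its $\#\mathscr{B}$ hardness proof and, exploiting the space-time symmetry it emphasizes throughout, rebuilds the instance so that each network edge of the Steiner instance corresponds bijectively to a pair of ancilla \emph{memory} links in a two-shot ST network; minimizing $\#\mathscr{M}$ then is exactly minimizing the Steiner tree. You instead reduce from \textsc{Set Cover}, concentrating all memory freedom on the star's center and encoding ``select set $S_j$'' as the single vertical descent $(s_j,2)\to(s_j,1)$, with the forced descent at $d_c$ and the forced two-link climbs at each $t_i$ contributing only an additive constant $2n+1$, so $\min\#\mathscr{M}=2n+1+\mathrm{OPT}$. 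Both constructions share the essential insight that once $\mathscr{b}_N$ is fixed, validity is pure space-time reachability and the memory links are the only purchasable resource; the paper's version buys a tighter thematic link to its Bell-pair/memory symmetry, while yours has a more transparent cost accounting. The ``no shortcuts'' step you flag does go through and is worth writing out explicitly: every layer-1 Bell arc is oriented \emph{out of} a set node (so reachability cannot enter $(s_j,1)$ horizontally, by the antisymmetry condition $b_1(\vec e,v_s)=-b_1(\cev e,v_s)$ and the requirement of strictly positive flow from reached to new), the only other in-link of $(s_j,1)$ is the memory edge from $(s_j,2)$, and $(t_i,k)$ for $k\geq 2$ carries no Bell arcs, so $(t_i,3)$ is reachable only via $(s_j,1)\to(t_i,1)\to(t_i,2)\to(t_i,3)$ for some selected $j$ with $e_i\in S_j$. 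One caveat: because the additive term $2n+1$ dominates $\mathrm{OPT}$, your reduction yields NP-hardness but not the $\ln n$-inapproximability one might hope to inherit from \textsc{Set Cover}; the paper's Steiner-tree reduction has the same limitation, so nothing is lost relative to the claimed statement.
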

\begin{proof}[Proof of Theorem~.\ref{thm:IntractGivenSn}] With the help of a space-time network, there is an incredible symmetry between the Bell pair links and the memory links, and so does $\#\mathscr{M}$ and $\#\mathscr{B}$, which inspires us to prove the hardness of this problem by reducing the Steiner tree's problem to it again. For any given Steiner Tree instance $\langle G=(V,E),S\subseteq V\rangle$, we try to construct a $\mathscr{P}$ with $\mathscr{b}_N$.

Again, the graph state $G_S$ is selected to be a star graph with each vertex representing an element in $S\subseteq V$ just like what we do in the $\#\mathscr{B}$ case. And now, WLOG, we let $N=2$ and let $\mathscr{b}_N$ again, only assign the bell links for the center vertex $v_c$ of the graph state. The basic network graph $G_N$ inherits all the nodes from $V$ but with more ancilla nodes and a different edge set. This allows us to make $\alpha_D(V_S)=S\subseteq V_N$ again. To proceed, we directly construct the ST network by first creating 2 nodes and 1 memory link for each node in $V$ then reproducing all the edges from $E$ by mapping it to 4 additional new ancilla nodes and channels according to the following rule: first connecting each endpoint to one ancilla node in the second shot and then connecting the two corresponding ancilla nodes in the second shot through the proper assignment of the values of $b_2$ and $b_1$, respectively, as shown in Fig.~\ref{fig:ThmI}. Now, one can verify that every edge usage in Steiner Tree's instance is mapped bijectively into the usage of two ancilla memory links in the constructed ST network.  The initial Steiner tree problem is minimized if and only if one can find a $\mathscr{m}_2$ with minimum memory usage that connects all the $S$ of the second shot in the constructed ST-network, which completes the proof.
    
\end{proof}
\begin{figure}[h]
  \centering
  \includegraphics[width=\linewidth]{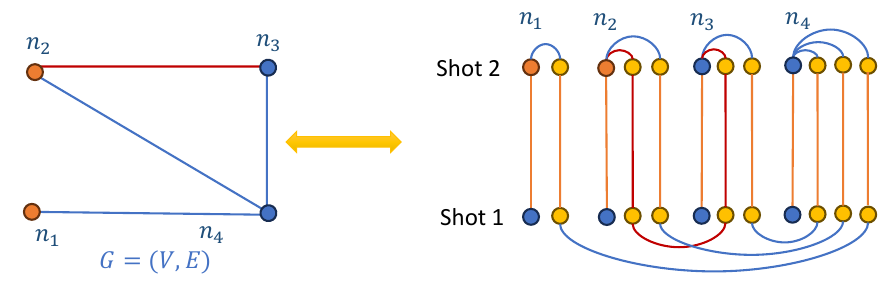}
  \caption{\label{fig:ThmI}Construction of $\mathscr{P}$ and $\mathscr{s}_N$ from Steiner's Tree instance. The left side is the Steiner's Tree instance with orange nodes indicating the vertices from $S$. The right side is the constructed two-shot ST network, where we have omitted the ancilla shot. The blue links are the bell links constructed by making $s_1$ or $s_2$ giving a non-zero flow value to the corresponding channels of $G_N$. The orange links are the memory links and the yellow nodes are the ancilla nodes. The red edges on the left and the red path on the right show the one-to-one corresponding consumption of edges and the memory links. }
  \Description{ThmI Illustration}
\end{figure}
\begin{theorem}[Simply Executable Edge-disjoint Path Set]
\label{thm:SEEDPS}
For all $N$-shot flow solutions for $\mathscr{P}$ given by GST, it is always possible to find a path set that is \textit{simply executable} in $N$-shot and corresponds to a valid solution $\mathscr{S}_n$.
\end{theorem}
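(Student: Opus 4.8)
The plan is to reduce the statement to an equality between two max-flow values and then exploit the integrality of flow networks. First I would reformulate both sides as flows. Recall that a GST $N$-shot flow solution is exactly an integral flow of value $|V_S|$ from the root $n_{\mathrm{root}}$ to the $|V_S|$ target demands in a \emph{single} copy of $G_N$ in which every channel $e$ is given capacity $N\,W_c(e)$, i.e.\ the total width available over $N$ shots. On the other side, the MGST flow graph is the $N$-copy network, and---as noted just before the theorem---a simply-executable edge-disjoint path set exists precisely when its integral max flow equals $|V_S|$. The crucial feature here is that $G_{\mathrm{flow}}$ is a \emph{disjoint} union of the copies $G_N^i$, so every unit of flow entering a copy at $n_{\mathrm{root}}^i$, traversing channels, and leaving at some $\alpha_D(v_s)^i$ is confined to that single copy and hence to a single shot. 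Thus it suffices to show: if the single-copy network with capacities $N\,W_c(e)$ carries flow $|V_S|$, then the $N$-copy MGST network also carries integral flow $|V_S|$.

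The key step is a fractional spreading argument. Let $f$ be GST's flow, so $f(e)\le N\,W_c(e)$ on every channel and $f$ delivers one unit to each target $\alpha_D(v_s)$. I would place the scaled flow $f/N$ inside each of the $N$ copies simultaneously. In copy $i$ this is a legitimate fractional flow from $n_{\mathrm{root}}^i$ to the targets, its channel usage is $f(e)/N\le W_c(e)$, and it delivers $1/N$ unit to each $\alpha_D(v_s)^i$. Aggregating the $N$ copies at the virtual end $n_{\mathrm{vend},v_s}$ collects $N\cdot(1/N)=1$ unit per graph-state vertex, which then saturates the capacity-$1$ arc into $n_{\mathrm{uvend}}$, while each virtual-root arc carries only $|V_S|/N\le|V_S|$. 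Hence the MGST network admits a feasible fractional flow of value $|V_S|$.

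To finish, I would invoke integrality. All capacities in the MGST flow graph (the $W_c(e)$, the $1$'s, and the $|V_S|$'s) are integers, so by the Ford--Fulkerson integral-flow theorem its integral max flow equals its fractional max flow, namely $|V_S|$. Running the flow-decomposition theorem on this integral max flow then yields $|V_S|$ paths, each contained in one copy $G_N^i$ and therefore executable entirely within shot $i$, and collectively respecting the per-shot widths $W_c(e)$. This is the desired simply-executable path set, and pairing it with the trivial memory strategy (the root holds each qubit until its single-shot path is executed) gives a valid solution $\mathscr{s}_N$ on the same $N$ shots that GST uses.

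The subtle point---and the step I expect to be the real obstacle---is resisting the naive approach of decomposing GST's flow into $|V_S|$ fixed paths and edge-coloring \emph{those} paths into $N$ shots. That approach can genuinely fail: one can arrange a decomposition whose paths pairwise share channels in a cyclic pattern, forcing more than $N$ color classes even though $f(e)\le N\,W_c(e)$ holds for every channel. The resolution is that the path identities produced by GST need not be preserved; only the flow \emph{value} must be. By spreading $f$ fractionally and then appealing to integrality, one obtains an integral routing in the $N$-copy network that may use entirely different paths, sidestepping the coloring obstruction. Making this distinction explicit, and checking that the spread flow $f/N$ is conserved at every internal node of every copy, is where the argument needs the most care.
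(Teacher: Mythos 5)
Your proposal is correct and follows essentially the same route as the paper's proof: divide the GST flow by $N$, place the scaled fractional flow in each of the $N$ copies of $G_N$ inside the MGST flow graph to certify a feasible flow of value $|V_S|$, and then invoke the integral flow theorem to recover an integral solution whose decomposition yields the simply executable path set. Your added remark about why one cannot simply schedule a fixed path decomposition of GST's flow is a useful clarification but does not change the argument.
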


\begin{proof}[Proof of Theorem~.\ref{thm:SEEDPS}] 
We prove it by constructing a flow solution for the flow graph shown in Fig.~\ref{fig:MGST} (a) from the flow solution given by GST. Recall that GST seeks a flow solution, denoted as $[\mathscr{b}_N]_{GST}$, for $N$ shots simply by multiplying the capacity of each channel by $N$. We first divide the flow given by $[\mathscr{b}_N]_{GST}$ by $N$ and get a fractional flow that fits perfectly with the capacity requirement of each $G_N^i,\,i=1,...,N$ (we also assign the flow on the virtual edges correspondingly). Notice that such a fractional flow guarantees that the max flow from $n_{\mathrm{vroot}}$ to $n_{\mathrm{tvend}}$ is at least $|V_S|$. Now, using the well-known integral flow theorem\cite{ford1956maximal}, we can convert such a fractional flow solution into the desired integral flow solution for the flow graph, which completes the proof. 
\end{proof}

\begin{theorem}[Distributing A Layer of CZ Gates]
\label{thm:DistribLayerCZ}
Given a graph state $|G_S\rangle$, let $i$ label the graph state qubits and $i''$ label the computation qubits where the same letter means the two qubits are held at the same node. It is possible to use the LOCC operations to transfer $CZ_{E_S}\equiv\prod_{(i,j)\in E_S}{CZ_{ij}}$ into $\prod_{(i,j)\in E_S}{CZ_{i''j''}}$ using the resources graph state $|G_S\rangle.$
\end{theorem}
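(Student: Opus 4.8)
The plan is to prove the statement entirely within the stabilizer formalism by tracking how the gadget of steps~2--4 transforms the logical Pauli operators of the computation register $\{i''\}$. Recall that $|G_S\rangle$ is stabilized by the generators $K_i = X_i\prod_{j\in N_i}Z_j$, each ancilla $|+\rangle_{i'}$ is stabilized by $X_{i'}$, and the computation qubits carry an \emph{arbitrary} input state $|\psi\rangle$; so instead of stabilizers I would follow the images of $X_{i''}$ and $Z_{i''}$. Since $\{X_{i''},Z_{i''}\}_i$ generate the full logical algebra, it suffices to show that the channel induced on $\{i''\}$ reproduces conjugation by $CZ_{E_S}$, i.e. that $Z_{i''}\mapsto Z_{i''}$ and $X_{i''}\mapsto X_{i''}\prod_{j\in N_i}Z_{j''}$, each up to a $\pm1$ byproduct that the correction in step~4 removes.

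First I would propagate $Z_{i''}$ forward through the two local entangling gates $CZ_{ii'}$ and $CZ_{i'i''}$. Because $CZ$ is diagonal, $Z_{i''}$ is left invariant and never touches a measured qubit, so the protocol sends $Z_{i''}\mapsto Z_{i''}$. This already matches $CZ_{E_S}\,Z_{i''}\,CZ_{E_S}^\dagger = Z_{i''}$ and requires no correction.

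The substantive computation is the image of $X_{i''}$. Propagating through $CZ_{i'i''}$ turns $X_{i''}$ into $Z_{i'}X_{i''}$, which anticommutes with the measured observable $X_{i'}$; to express it on the output I would multiply by the (evolved) graph-state stabilizer $K_i$, which carries the matching $Z_{i'}$ after the gates, producing $X_iX_{i''}\prod_{j\in N_i}Z_j$. The residual factors $Z_j$ live on graph qubits $j$ that are themselves measured in the $X$ basis, so I would clear each one by multiplying by the evolved ancilla stabilizer $Z_jX_{j'}Z_{j''}$, converting $Z_j\mapsto X_{j'}Z_{j''}$. Replacing the now fully commuting measured operators by their eigenvalues, $X_i\to m_i$ and $X_{j'}\to m'_j$, yields
\[
  X_{i''}\ \longmapsto\ \Big(m_i\!\!\prod_{j\in N_i}\! m'_j\Big)\,X_{i''}\prod_{j\in N_i}Z_{j''},
\]
which is precisely the $CZ_{E_S}$ image $X_{i''}\prod_{j\in N_i}Z_{j''}$ up to the scalar sign $m_i\prod_{j\in N_i}m'_j\in\{\pm1\}$. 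Applying the single-qubit $Z_{i''}$ byproduct of step~4 flips this sign exactly when it is $-1$, restoring the target operator.

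Having shown that both generators transform as under $CZ_{E_S}$ after the step~4 correction, the induced channel on the computation register is conjugation by $\prod_{(i,j)\in E_S}CZ_{i''j''}$ for every input $|\psi\rangle$, which is the claim. I expect the only delicate point to be the byproduct bookkeeping in the $X_{i''}$ case: one must verify that the anticommuting $Z$'s created on the measured graph and ancilla qubits are cleared by the \emph{correct} stabilizer generators, and that the accumulated eigenvalues assemble into exactly the neighbour product $m_i\prod_{j\in N_i}m'_j$ appearing in the step~4 correction rather than some other combination.
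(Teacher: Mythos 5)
Your proposal is correct and follows essentially the same route as the paper's proof: both work in the Heisenberg/stabilizer picture, propagate $Z_{i''}$ and $X_{i''}$ through the local $CZ$ gates, clear the anticommuting $Z_{i'}$ and $Z_j$ factors by multiplying with the evolved stabilizers $Z_{i'}X_iZ_{N_i}$ and $X_{j'}Z_jZ_{j''}$, and substitute measurement outcomes to obtain the byproduct sign $m_i\prod_{j\in N_i}m'_j$ removed by the step-4 correction. No substantive differences to report.
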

\begin{proof}[Proof of Theorem~.\ref{thm:DistribLayerCZ}] 
This is a straightforward fact if one views the whole process via the string diagrams\cite{coecke2010quantum} from the categorical quantum computation, while we will prove it using the stabilizer formalism in the Heisenberg picture. The notation used in the proof is the same as that used in the main text and in Fig.~\ref{fig:DQC}. 

Firstly, our goal is to transform the operators for the qubit set $\{i''\}$ in the following way:
\begin{equation}
    Z_{i''}\rightarrow Z_{i''};\,X_{i''}\rightarrow X_{i''}\prod_{j\in N_{i}}{Z_{j''}},
\end{equation}
where $N_{i}$ here denotes the neighbors of vertex $i$ in the graph state.
Notice that the resource graph state as well as the ancilla qubits are stabilized by
\begin{equation}
\begin{aligned}
    K_i&\equiv X_iZ_{N_i}\equiv X_i\prod_{j\in N_i}Z_j\equiv1;\\
    X_{i'}&\equiv 1.
\end{aligned}
\end{equation}
After the local CZ gates, one has the following transformation:
\begin{equation}
    \begin{aligned}
        Z_{i''}&\rightarrow Z_{i''};&\,X_{i''}&\rightarrow X_{i''}Z_{i'};\\
        Z_{i'}&\rightarrow Z_{i'};&X_{i'}\equiv1&\rightarrow X_{i'}Z_{i}Z_{i''}\equiv1;\\
        Z_{i}&\rightarrow Z_{i};\,&X_iZ_{N_i}\equiv1&\rightarrow Z_{i'}X_iZ_{N_i}\equiv1.
    \end{aligned}
\end{equation}
After the measurement is performed, the following operators will be projected into the following values:
\begin{equation}
    X_{i'}=m_{i'};\,X_{i}=m_{i},
\end{equation}
where $m_i$ denotes the corresponding measurement result.
Combining all the equations above, we will have
\begin{equation}
\begin{aligned}
    Z_{i''}&\rightarrow& Z_{i''};&\\
    X_{i''}&\rightarrow& X_{i''}Z_{i'}&=X_{i''}Z_{i'}(Z_{i'}X_iZ_{N_i})(\prod_{j\in N_i} X_{j'}Z_jZ_{j''})\\
    &&&=(m_{i}m_{N_{i'}})X_{i''}Z_{N_{i''}},
\end{aligned}
\end{equation}
where $m_{N_{i}}$ denotes the product of the measurement results of the neighbors of vertex $i$.
The constant factor is then eliminated by local operation $Z_{i''}^{(1-m_{i}m_{N_{i'}})/2}$, which gives us the desired transformation and completes the proof.
\end{proof}

\begin{figure}[h]
  \centering
  \includegraphics[width=\linewidth]{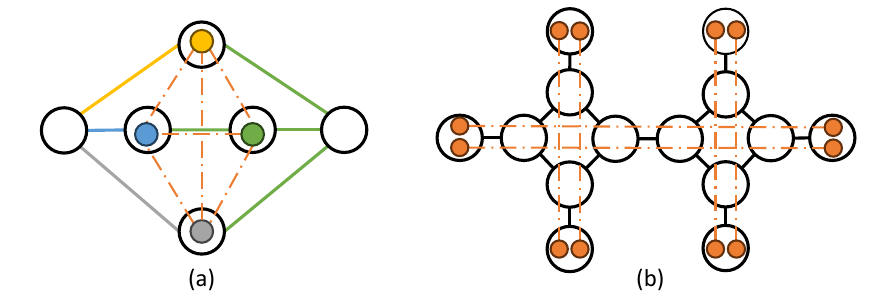}
  \caption{\label{fig:counterExamples} The illustration for the two demonstrative examples. The black circles and lines represent the nodes and width-1 channels of the network while the colorful circles and dashed lines represent the vertices and edges of the graph state.  (a) An example where the graph state is a complete graph. It is impossible for MGST to give a one-shot solution while it is possible for peer-to-peer-based algorithms.  (b) The example where it is impossible to distribute the graph state with a no-plan-ahead solution within two shots. }
\end{figure}

Finally, we offer the two examples mentioned in the main text.
\begin{example}[Counter Example of MGST Being Optimal on Complete Graph]
\label{ex:CEMGST}
The distributed graph state is a complete graph with $4$ vertices with its location indicating the assignment in the network as shown in Fig.~\ref{fig:counterExamples}(a). One can persuade oneself that it is impossible for MGST to give a single shot solution by investigating the cut set while the colors of the channel indicate the channel assignment that offers a one-shot solution.
\end{example}
\begin{example}[Plan-ahead Solution Advantageous Task]
\label{ex:PSAT}
The network topology is the cell topology introduced in the main text with two cells. The graph state being distributed consists of 6 pairs of bell states. One can persuade oneself that it is impossible to find a two-shot solution that is no-plan-ahead due to the bottleneck between two cells which always makes some of the channels idle in the first shot.
\end{example}

\section{RECOVERY ALGORITHM}
\label{sec:rec}
The recovery algorithm is inspired and improved from the recovery algorithm proposed for the Bell pair routing\cite{shi2020concurrent}, which is a path-based algorithm and uses the residue network resource to try to increase the probability of success of the path. The algorithms we proposed here consist of two parts: 
\begin{enumerate}
    \item Recovery using the residue Bell pair resources in the current shot.
    \item Recovery by using the memory links, i.e., saving the successful part of the path to the next shot.
\end{enumerate}
The second part of this algorithm acts as an outlook in the mentioned reference, and we will illustrate the details of the first part and then follow by motivation and the implementation of the second part.

The first part begins by finding the recovery paths based on the given maximum hop and the main paths, which are basically the same as the one in the reference. The algorithm begins by iterating through the hop count and for each hop $k$, iterating through the nodes along the main path to find a recovery path connecting the node and the next $k$-hop node. These recovery paths, $\{P_{\mathrm{rec}}\}$, will be assigned to the main path for recovery usage and the nodes along the recovery paths will switch through the recovery path regardless of the link status if only there are available resources. The key essence is for each node along the main path to determine which two of the qubits inside the node shall perform a switch to maximize the probability of establishing such a path, which is also the improvement of the algorithm we achieve. One may argue that we all perform CZ gates to connect all the qubits, however, this may significantly increase the CZ gates and harm the chance of success if the probability of such a CZ gate is not 1. In addition, to avoid forming a loop, which eliminates the established path section under X measurement, one shall save all the qubits along the paths for at least one shot to determine which qubits to drop, which may also significantly increase the cumulative memory usage. 

The idea of the qubit selection algorithm is simple: The node first updates the probability of channels based on the link state information $L$, which is a bool array of two times the maximum hop $h_{\max}$: the successful links are updated with unit probability, and the unsuccessful links are dropped. Then, the node shall find a path that passes through it and maximize the probability in the residue network formed by the updated channels along the main paths as well as the channels given by the recovery paths with the preassumed probabilities. Such a path can be found using the Minimum Cost Maximum Flow algorithm~\cite{kiraly2012efficient} using a construction similar to MGST treating the node as the source and finding two disjoint paths to the two end nodes. Then the node will choose the two qubits that are along to perform switch such a path if such a path is found. We call this algorithm the Expected Union Method (EUM), which is summarized in Alg. \ref{alg:EUM}. We remark that the key point behind this algorithm is that the events according to each switch choice of that pair are exclusive, and performing the switch will never harm the total chance of success.

\begin{algorithm}
    \caption{EUM}
    \label{alg:EUM}
    \begin{algorithmic}
       \Function{EUM}{$P_{\mathrm{main}},\{P_\mathrm{rec}\},L,v_n$} 
            \State{$\mathscr{N}_{\mathrm{res}}\gets \mathrm{build\_residue\_network}(P_{\mathrm{main}},\{P_\mathrm{rec}\},L,v_n$)}
            \State{$f_{\mathrm{\mathrm{minCostFlow}}}\gets\mathrm{Min\_Cost\_Max\_Flow}(\mathscr{N}_{\mathrm{res}})$}
            \State{$P_{\mathrm{opt}}=\mathrm{flow\_decomposition\_theorem}(f_{\mathrm{\mathrm{minCostFlow}}})$}
            \State \Return{$P_{\mathrm{opt}}$}
        \EndFunction
    \end{algorithmic}
\end{algorithm}

As for recovering using memory, a naive and simple idea is just always to store the qubits as long as there is enough long-term memory. This will indeed lower the shot usage to establish a certain path; however, it will not only cause a blow-up in the usage of cumulative memory but may also waste resources if a high probability link is stored. One observes that there is certainly some threshold behavior in terms of cumulative memory usage, and we shall illustrate it via an intuitive but analytically solvable example here: the task is to establish a Bell pair link along a path with $n$ channels, which has identical probability $p$. We consider a simple storing strategy that treats two shots as one cycle and always saves one qubit if the two switched qubits for each node at the first shot and the qubits at two ends must be saved for two shots. We can calculate the cumulative memory usage to execute such a cycle as $\#\mathscr{C}_{c2}=n+3$, while the expected number of cycles to be executed can be approximated as $T_{c2}\approx\frac{1}{p^n(2-p)^n}$, from which we calculate the expected cumulative memory usage as $\#\mathscr{C}_{\mathrm{tot},\,c2}=\frac{n+3}{p^n(2-p)^n}$. Compared to the no-saving strategy, whose expected cumulative memory usage is $\frac{2}{p^n}$, one immediately deduces a threshold probability as $p_{\mathrm{th}}=2-\left(\frac{n+3}{2}\right)^{\frac{1}{n}}$, in case $n=2$, this corresponds to $p_{th}\approx0.42$, above which one concludes that the no-saving strategy has an advantage. Considering a cycle with $k>2$ shots with an adaptive saving strategy is much more tricky since the neighboring links can be merged into one link to avoid saving the intermediate qubits, although one can easily calculate the expected number of cycles as $T_{ck}\approx\frac{1}{(1-(1-p)^k)^n}$. Even worse, if the memory link is probabilistic in a more general case, the analytical calculation of the expected number of cycles also becomes hardly possible since one can show that this corresponds to a percolation problem in a grid network~\cite{pant2019routing}. 

To recover using the time link, the key idea is that each node along the path decides whether it shall save or discard the qubits. If such recovery needs to be performed, then the current path cannot be completed in the current shot, and the saved resources will be used in the next shot. Thus, we may try to find a path in a two-shot space-time network with the time coordinate for both endpoint nodes lying on the second shot and let the path-finding algorithm decide whether it is necessary and possible to use the resources from the current shot as well as the memory links. The first thing to notice is that there can be a case where a path can not be found for the original version of EUM while it is possible if we consider using the memory links and the channels in the second shot. In this case, if the node does not perform a switch, it may waste the established resources in the current shot. This problem can also be solved by running EUM on the 2-shot space-time network and assigning a small prefactor (which may equal the minimum channel probability of the channel in the first shot) to the probability of each channel in the second shot to encourage the algorithm to use the channels at the current shot as much as possible. Now, the EUM runs on the 2-shot spacetime network, which shall give us the principal path passing the current node. Then we can use the aforementioned pathfinding algorithm to determine the usage of the memory link~\cite{azuma2023minimum}. The two-shot space-time network is constructed as follows:
\begin{enumerate}
    \item The node first runs EUM on the two-shot space-time network to decide its recovered principal path. Here, the probability of the channels in the second shot is given by the formula of Eq.~(\ref{eq:newBellPair}) in the main text with $o$ set to $0$ and multiplied by the small prefactor selected.
    \item Then the node constructs the first shot layer of the spacetime network by adding the first shot channels of its recovered principal path, the cost being the negative logarithm of the probability for each channel along the path to it. 
    \item The node adds a memory link with an adjustable memory cost $\mathrm{memCost}$ (which is default $-\log(0.8)$) for each node along the original main path if there is long-term memory.
    \item The node adds the channels belonging to the original main path to the second shot layer with capacity decided by the following rule: \begin{enumerate}
        \item If the link status of that channel in the current shot is known to the node with a successfully established Bell pair number of $c_s$, and the current main path being the $o_s$-th occupation of this path, then the channel probability is again given by the formula of Eq.~(\ref{eq:newBellPair}) in the main text with $o$ set to $\max(o_s-c_s-1,0)$. The cost is given by taking the negative logarithm. 
        \item If the link status of that channel in the current shot is unknown, then the channel cost is given by the expected channel cost using the formula above over all the possible $c_s$.
        \item If the channel is the channel saved from the previous shot, then it will not have a second shot channel.
    \end{enumerate}
\end{enumerate}
Now, it suffices to check whether the memory link channel has been used in the newly found path. If it is, then the node chooses to save the qubit. We call it the ST-EUM algorithm which is summarized in Alg.~\ref{alg:ST-EUM}.

\begin{algorithm}
    \caption{ST-EUM}
    \label{alg:ST-EUM}
    \begin{algorithmic}
       \Function{ST-EUM}{$P_{\mathrm{main}},\{P_\mathrm{rec}\},L,v_n$} 
            \State{$P_{\mathrm{opt}}=\mathrm{Two\_Shot\_EUM}(P_{\mathrm{main}},\{P_\mathrm{rec}\},L,v_n)$}
            \State{$\mathscr{N}_{\mathrm{res}}\gets \mathrm{build\_ST\_residue\_network}(P_{\mathrm{opt}},P_{\mathrm{main}},L)$}
            \State{$\mathrm{source}\gets \mathrm{Second\_Shot}(\mathscr{N}_{\mathrm{res}},P_{\mathrm
            {main}}[0])$}
            \State{$\mathrm{target}\gets \mathrm{Second\_Shot}(\mathscr{N}_{\mathrm{res}},P_{\mathrm
            {main}}[-1])$}
            \State{$\mathrm{newPath}\gets \mathrm{Dijkstra}(\mathscr{N}_{\mathrm{res}},\mathrm{source,target})$}
            \State{$\mathrm{ifSave}\gets \mathrm{MemLink}(v_n)\,\mathrm{in}\,\mathrm{newPath}$}
            \State \Return{$(P_{\mathrm{opt}},\mathrm{ifSave})$}
        \EndFunction
    \end{algorithmic}
\end{algorithm}
Finally, We remark that EUM is activated by default in the performance evaluation in the main text while ST-EUM is not, since the additional cumulative memory usage by ST-EUM will affect the performance metric of our graph state distribution algorithms.

\section{DETAILS OF ADAPTIVE PROTOCOLS}
\label{sec:detailAda}
The adaptive protocol is designed to make full use of the network information that is synchronized at Phase One of the protocol and to avoid unnecessary waste of resources. Although this might require the algorithms to run multiple times at each run for unfinished distribution tasks, it can significantly improve performance. One might also design a flow algorithm that has a static plan over the whole distribution (which can be easily modified from our algorithms), we argue that the same graph state, unlike the Bell pair resources, is less likely to be requested multiple times; even it is, one can always merge multiple copies of the graph states into a single distribution graph by a disjoint union.  

As mentioned in Sec.~\ref{subsec:AdaProtocol}, the protocol will collect currently established edges of the graph state as well as the resources that are saved using the long-term memory, after which, a new request $\mathscr{P}'$ taking account of this information will be created and submitted to the algorithms, which involves modifying the network topology $\mathscr{N}$ as well as the graph state $G_S$ being distributed. We will illustrate the details of both modifications in this section.

\textbf{Modification of the network.} As we see in Sec.~\ref{sec:rec}, some established Bell pairs will be saved using if there is enough long-term memory, which serves as a recovery over the time axis. Those saved Bell pairs will appear as a one-time channel with a unit probability added to the modified network, $\mathscr{N}$, and submitted to the algorithms. Certainly, some marker will be added to those channels to inform the algorithms that these channels will not be reset in the next shot.

\textbf{Modification of the graph state.} 
The modification is different for the three main algorithms as follows:
\begin{enumerate}
    \item \textbf{MGST}: The modification is fairly easy: the protocol will collect the vertices where the Bell pair has been established successfully between the root and the assigned node. Then those vertices are deleted from the graph state. The long-term corresponding memory will also be free from the center node at the end of Phase One for the next shot. In addition, the root node will be recorded and MGST will not perform a search for the root node in the next shot since the graph state has been generated at the root node.
    \item \textbf{P2PGSD}: The basic idea proceeds in a similar way except that the edges of the graph state instead of vertices are considered as 'files', and those that have been successfully established will be deleted from $G_N$. The key part is to take care of the connections of the vertices at the nodes that are kept to the next shot by the memory strategy. Based on the spirit of peer-to-peer distribution, the node that keeps a certain vertex's connection can be viewed as a newly assigned node for the vertex $v_s$ and we can simply add the node to $\alpha_{D}(v_s)$ which makes the image a subset of $V_N$, which allow the algorithm to construct the initial VRM in the same way. One can delete a certain node from $\alpha_D(v_s)$ if the connection is no longer kept from memory management. Prior to submitting the task, if a certain assigned node does not have enough long-term memory to hold graph state vertex, which could happen only when multiple tasks are processed at the same time, the vertex will be temporally removed from $G_S$. In addition, once the assigned node has enough memory for the vertex, the vertex will be added back to $G_S$ and the memory will be occupied by the vertex until the distribution is finished.
    \item \textbf{ST-P2PGSD}: The protocol for the spacetime version is based on the previous one but way more complicated due to the plan ahead solution. One observation is that the connections established in the previous shot always lie in shot one and never really connected to the assigned node in the ancilla shot when viewed from the spacetime network. The nodes that currently hold the connections for a specific vertex as planned by the algorithm form several connected components using the Bell pairs that have been successfully distributed. The whole distribution process can be viewed as connecting and merging those components together into one using the Bell pairs and CZ gates to get one final component that includes the assigned node and establishes the edge with the components of other vertices. Thus we can create a virtual vertex $v_{v_s}^{(v)}$ to represent each of these components and record the following information:
    \begin{enumerate}
        \item Its source vertex $v_s$ that belongs to the original graph state.
        \item The nodes that currently hold the connection, i.e. the nodes that belong to this component.
        \item The edges that have been established via this component.
        \item The nodes belong to this component and are used to connect this component to the next shot in the spacetime network as planned by the algorithm.
    \end{enumerate}
    Information (b) with time coordinate as shot 1 is defined as the image of $\alpha_D(v_{v_s}^{(v)})$. Then, we also add one virtual edge $(v_s,v_{v_s}^{(v)})$ to $G_S$ to indicate that the component shall be connected to $v_s$ in the future either by a pure memory link (no-plan-ahead) or more Bell pair links. And once two components are connected by the newly established Bell pairs, they will merge. The edges that are successfully established by some current components that are saved in long-term memory will be collected in information (c) and will be temporally removed from $G_S$.

    There have also been many efforts to modify the ST-P2PGSD algorithm to take care of these virtual nodes. The algorithm will search for the path in the ST-Network to implement the virtual edges first based on a specific order determined by information (b) and (c) similar to how we implement the real edges. The VRM for a virtual vertex will not be updated during the execution of the algorithm; instead, once the virtual vertex is connected to the real vertex, its VRM will be merged with the VRM of the real vertex. This method may result in an algorithm that does not give a fixed solution over each shot even if the network has unit channel probability, which may result in some resource waste, while it does converge and work well in practice since the components are being merged together at each shot and expand themselves. Finally, prior to the search for the path in the algorithm, the memory links in the node indicated by the information (d) for the virtual vertices and $\alpha_D(v_s)$ for the real vertices that are reserved for them to prevent failure to obtain a solution in the limited memory case.
\end{enumerate}
If any of the switches (CZ gates) fail, we will adopt a rather pessimistic assumption where the task will need to start from scratch and be reset to the initial task $\mathscr{P}$.

Finally, these protocols allow for multitask processing with priority by feeding the residue network with resources at the current shot used by the previous tasks being removed to the algorithm that processes the next task.
\end{document}